\documentclass[11pt, reqno]{extarticle}
\usepackage[utf8]{inputenc}
\usepackage[margin=1in]{geometry}
\usepackage{enumitem, amssymb, amsmath, amsthm,xcolor, cancel, hyperref, graphicx, etoolbox, setspace, subcaption, lmodern, yfonts, float, subcaption, colortbl, dsfont, centernot, comment, bigints, empheq,amsthm}
\usepackage{authblk}
\usepackage[font=small,labelfont=bf,width=0.95\textwidth]{caption}
\usepackage{rotating}
\usepackage{indentfirst}
\usepackage{appendix}
\newtheorem{theorem}
{Theorem}
\newtheorem{corol}{Corollary}
\newtheorem{remark}{Remark}

\def\Abar{\bar{A}}

\def\rbar{\bar{\rho}}
\def\ubar{\bar{u}}

\allowdisplaybreaks

\setcounter{section}{0}
\counterwithin{equation}{subsection}
\counterwithin{theorem}{section}
\counterwithin{remark}{section}
\counterwithin{corol}{section}
\pagestyle{plain}
\title{

Global Bifurcations and Pattern Formation in Target–Offender–Guardian Crime Models

}
\author[1]{Madi Yerlanov}
\author[2]{Qi Wang}
\author[1]{Nancy Rodr{\'\i}guez}
\affil[1]{University of Colorado Boulder, Department of Applied Mathematics,\authorcr 1111 Engineering Center, Boulder, CO, 80309}
\affil[2]{Howard University, Department of Mathematics,\authorcr 204 Academic Service Building B,
Washington, D.C., 20059}
\date{\today}
\begin{document}
\maketitle

\begin{abstract}
We study a reaction–advection–diffusion model of a target–offender–guardian system designed to capture interactions between urban crime and policing. Using Crandall--Rabinowitz bifurcation theory and spectral analysis, we establish rigorous conditions for both steady-state and Hopf bifurcations. These results identify critical thresholds of policing intensity at which spatially uniform equilibria lose stability, leading either to persistent heterogeneous hotspots or oscillatory crime–policing cycles. From a criminological perspective, such thresholds represent tipping points in guardian mobility: once crossed, they can lock neighborhoods into stable clusters of criminal activity or trigger recurrent waves of hotspot formation. Numerical simulations complement the theory, exhibiting stationary patterns, periodic oscillations, and chaotic dynamics. By explicitly incorporating law enforcement as a third interacting component, our framework extends classical two-equation models. It offers new tools for analyzing nonlinear interactions, bifurcations, and pattern formation in multi-agent social systems.
\end{abstract}
\textbf{Key Words}: urban crime model, steady state bifurcation, Hopf bifurcation, pattern formation, chaos\\
\textbf{2020 Mathematics Subject Classification}: 35B36 (Primary), 70K50, 91D99 (Secondary)

\section{Introduction}

We study an initial-boundary value problem for $(A, \rho, u)$, defined on the space-time domain $(\textbf{x}, t) \in \Omega \times (0, T]$, where $\Omega \subset \mathbb{R}^N$ is a bounded domain with piecewise smooth boundary $\partial \Omega$, and the time horizon $T>0$
\begin{equation}\label{eq:E}
    \tag{E}
    \left\{
    \begin{aligned}
        \frac{\partial A}{\partial t}=&D_A\Delta A - A+\alpha+A\rho, &&\text{ in } \Omega\times(0, T], \\
        \frac{\partial \rho}{\partial t}=&\nabla \cdot \left(D_\rho \nabla \rho-2 \rho\nabla \ln A\right)-A\rho+\beta - \rho u, &&\text{ in } \Omega\times(0, T], \\
        \frac{\partial u}{\partial t}=&\nabla \cdot \left(D_u \nabla u-\chi u\nabla \ln A\right), &&\text{ in } \Omega\times(0, T], \\
         \frac{\partial A}{\partial \textbf{n}}=& \frac{\partial \rho}{\partial \textbf{n}}= \frac{\partial u}{\partial \textbf{n}}=0, &&\text{ on }\partial\Omega, \\
         (A, \rho, u)&(\textbf{x}, 0)=(A_0, \rho_0, u_0)\geq, \not\equiv 0, && \text{ in }\Omega.
    \end{aligned}
    \right.
\end{equation}
Here, $\partial$ denotes partial differentiation with respect to time $t$, while $\nabla$ and $\Delta$ represent the spatial gradient and Laplacian operators, respectively. The parameters $D_A, D_\rho, D_u, \alpha, $ and $\beta$ are assumed to be positive constants, whereas $\chi$ may take either positive or negative values, depending on the modeling context, as discussed later.  The zero-flux (Neumann) boundary conditions, expressed via $\frac{\partial}{\partial \mathbf{n}}$, are imposed to model an enclosed environment. The initial conditions are assumed to be nonnegative and not identically zero; this, together with the strong maximum principle, ensures that each component $(A, \rho, u)$ remains strictly positive for all $t > 0$. Finally, the terminal time $T$ represents the time horizon for the analysis: it is taken as infinite in theoretical investigations, and sufficiently large in numerical simulations to capture the system’s long-term behavior.

\subsection{The UCLA model}

System \eqref{eq:E} can be used to model the spatio-temporal evolution of urban criminal activity under the influence of police patrol intervention.  Specifically, it describes the dynamics of the population density $\rho$ of criminal agents, who are attracted to regions with high attractiveness $A$ and are deterred by the presence of law enforcement, represented by $u$.  This approach to modeling urban crime traces back to the seminal work \cite{short2008statistical} of a UCLA team in 2008, which was inspired by theories such as routine activity theory, the broken-windows effect, and the repeat and near-repeat victimization effect \cite{sampson2004seeing, windows1982police}. They developed a two-dimensional agent-based lattice model that captures both the movement of offenders and the evolution of local attractiveness.  In the spatial continuum limit, this discrete model leads to a fully coupled system of partial differential equations:
\begin{equation} \label{2008model}
    \left\{
    \begin{aligned}
        \frac{\partial A}{\partial t}=&D_A\Delta A - A+A_0+A\rho, &&\text{ in } \Omega\times(0, T], \\
        \frac{\partial \rho}{\partial t}=&\nabla \cdot \left(D_\rho \nabla \rho-2 \rho\nabla \ln A\right)-A\rho+\bar B - \rho u, &&\text{ in } \Omega\times(0, T].
    \end{aligned}
    \right.
\end{equation}
The linear diffusion coefficient $D_A$ represents the effect of the near-repeat victimization phenomenon. Since the expected number of crimes is given by the product $A\rho$, the self-exciting nature of crime is incorporated into the dynamics through the $+A\rho$ term in the first equation.  When a crime occurs, it is assumed that the perpetrator is removed, leading to the $-A\rho$ term in the second equation. External sources of growth, which may vary in space and time, are also considered for both variables and are denoted by $A_0$ and $\bar{B}$, respectively.  Finally, we note that the movement of offenders combines random motion, modeled by linear diffusion, with a directional bias toward regions of high attractiveness—resulting in a chemotaxis-like drift. We refer interested readers to \cite{gu2017stationary, short2008statistical} for detailed derivations and further justification of system \eqref{2008model}, and to \cite{d2015statistical, groff2019state} for reviews of agent-based models in urban crime modeling.

The UCLA continuum model and its 2D lattice counterpart have gained significant academic recognition for their ability to simulate complex spatiotemporal dynamics and generate both regular and irregular spatial patterns.  Most notably, these models successfully capture the aggregation phenomena, commonly known as hotspots, in urban criminal activity, providing critical insights into the spatial distribution of crime.  The global well-posedness of the model in \eqref{2008model} is investigated in \cite{freitag2018global, jiang2024global, rodriguez2013global, rodriguez2022global, wang2020global, winkler2019global}.  These studies have established that the system admits a unique global-in-time solution, which remains uniformly bounded in one-dimensional settings or in higher dimensions when the advection effect is weak or sublinear.  More importantly, a substantial body of research, including \cite{berestycki2014existence, cantrell2012global, garcia2013existence, gu2017stationary, heihoff2020generalized, kolokolnikov2012stability, lloyd2013localised, lloyd2016exploring, manasevich2013global, mei2020existence, short2010nonlinear,short2010dissipation,  tse2016hotspot}, has focused on the emergence of nontrivial spatial patterns. These works demonstrate that system \eqref{2008model} can produce structured crime hotspots, represented by spatially concentrated profiles that may be either stationary or time-dependent, thereby successfully capturing the characteristic features of urban residential burglary.  
Building upon these foundations, researchers have substantially extended the theoretical framework of system \eqref{2008model} to model real-world crime dynamics more accurately.  They include the study of nonlinear or anomalous diffusion and spatial heterogeneity in both the near-repeat victimization effect and the dispersal strategies of offenders \cite{gu2017stationary, martinez2022bi}; modeling criminal movement using L\'evy processes \cite{chaturapruek2013crime, levajkovic2016levy, pan2018crime}; incorporating age-structured populations \cite{kumar2023modelling, saldana2018age}; and examining geographic profiling \cite{mohler2012geographic}.  For further developments and comprehensive reviews in this area, we refer the reader to \cite{berestycki2010self, berestycki2014existence, d2015statistical, mohler2011self}.

\subsection{The UCLA model with Policing Force}

Given that crime and disorder are unevenly distributed across urban areas, scholars and practitioners in \cite{eck2003preventing, frydl2004fairness, sherman1995general, skogan1999community, weisburd2004can, weisburd1995policing} have advocated for concentrating policing efforts in high-demand zones rather than dispersing resources uniformly across the city.  This strategy, commonly referred to as “hot-spot policing” or “hotspotting, ” emphasizes geographic precision, focusing on the location of crimes rather than the characteristics of individual offenders.  While this targeted approach has demonstrated measurable short-term reductions in crime, its broader effectiveness remains the subject of debate.  Critics (\textit{cf.}, \cite{braga2001effects, braga2014effects, frydl2004fairness, kleck2014more, rosenbaum2006limits, weisburd2010problem}) have raised concerns about potential unintended consequences, such as the spatial displacement of crime, the erosion of community trust, and the risk of over-policing marginalized neighborhoods.  Moreover, questions remain about how different policing strategies interact with the underlying social and economic dynamics that drive criminal behavior. These challenges highlight the need for more comprehensive models that not only track where crime occurs but also capture how policing interventions influence the evolving spatial patterns of crime over time.

A major extension of the original model proposed by the UCLA team involves incorporating the effects of policing strategies on the evolution of criminal activity. In particular, the authors of \cite{jones2010statistical, pitcher2010adding} independently introduced an additional equation to represent police deployment and targeted patrols:
\begin{equation*}
u_t=\nabla \cdot (D_u\nabla u-\chi u\nabla\ln A), 
\end{equation*}
where $u$ denotes the distribution of law enforcement in the field.  The positive parameters $D_u$ and $\chi$ represent the intensity of the random movement of police and their response to criminal activity, respectively.  Incorporating an explicit law enforcement component allows one to systematically study how various patrol strategies, whether reactive or proactive, affect hotspot formation, stability, and suppression under different assumptions about agent movement and crime-attractiveness feedback.  When $\chi > 0$, law enforcement agents tend to migrate toward areas with high attractiveness, whereas $\chi < 0$ corresponds to a strategy in which police are deployed to regions of low attractiveness. The sign and magnitude of $\chi$ thus encode different policing strategies and can be chosen according to the specific modeling objectives.
 
While adding a third equation to model law enforcement enhances the system’s realism by incorporating police deployment strategies, it also introduces new mathematical challenges, particularly in analyzing pattern formation and stability. The resulting three-component reaction–diffusion system \eqref{eq:E} has been examined by several researchers to explore how policing influences crime dynamics. Tse and Ward \cite{tse2018asynchronous} conducted a detailed singular perturbation analysis of \eqref{eq:E} in one dimension, constructing steady-state hotspot (spike) solutions and performing a nonlocal eigenvalue analysis to characterize their linear stability. They distinguished between synchronous and asynchronous instabilities, showing that hotspot locations may destabilize either in phase (oscillating together) or out of phase (oscillating asynchronously), depending on police response and other system parameters. These results highlight how law enforcement strategies shape the temporal dynamics of urban crime distributions. Building on this foundation, Buttenschoen et al. \cite{buttenschoen2020cops} refined the linear stability framework and focused more explicitly on how targeted policing strategies influence hotspot selection and stabilization. Their work strengthens the connection between the number and spacing of spikes and the intensity of police deployment, demonstrating, for example, that increasing the strength or spatial focus of policing can shift the threshold for hotspot instability.

Rodr{\'\i}guez et al. \cite{rodriguez2021understanding} applied bifurcation theory to rigorously establish the existence of stable spatial and temporal patterns (steady states and periodic orbits), identify critical thresholds for pattern emergence, and analyze wavemode selection. Strikingly, they showed that anti-hotspot policing—dispatching police away from areas of high attractiveness—can stabilize crime hotspots and suppress oscillations. Although they proved global existence and boundedness of solutions, their analysis also revealed that the system may exhibit ill-posedness due to strong sensitivity to initial conditions, raising concerns about its predictive reliability. Further studies of system \eqref{eq:E} can be found in \cite{gai2024nucleation, ricketson2010continuum, yerlanov2025, zipkin2014cops}, where the extended model has been examined under different contexts. Collectively, these works demonstrate that the system supports a broad spectrum of dynamic behaviors—including spatially uniform equilibria, spatially heterogeneous steady states, periodic oscillations, and even chaotic patterns. Numerical simulations across the literature not only validate the theoretical predictions but also underscore the model’s inherent richness and complexity.

\subsection{Preliminaries and outline}

Before detailing our main steps and contributions, we first note that the framework assumes  a fixed number of guardians throughout all temporal dynamics as
\[\frac{d}{dt}\int_\Omega u(\textbf{x}, t)d\textbf{x}=0\quad \forall t>0.\]
If we introduce a parameter $\lambda$ to represent the spatial average of $A$, then the system admits the following positive constant steady state 
\begin{equation*}\label{eq:CSS}
    \tag{SS} 
    \Abar :=\lambda, \;\;
    \rbar :=1-\frac{\alpha}{\lambda}, \;\;
    \ubar :=\frac{1}{|\Omega|}\int_\Omega u_0(\textbf{x})\;d\textbf{x}= \lambda\left(\frac{\beta}{\lambda-\alpha}-1\right), 
\end{equation*}

each of which is positive under the condition 
\begin{equation*}
\alpha<\lambda<\alpha+\beta.
\end{equation*} 
Our study focuses on spatially heterogeneous solutions of system \eqref{eq:E} that extend beyond the homogeneous patterns described by \eqref{eq:CSS}.  To characterize the system's long-term behavior, we analyze its steady states, natural candidates for dynamical attractors, by seeking nonconstant positive solutions to the following coupled system:  
\begin{equation}\label{eq:EQ}\tag{EQ}
    \left\{
    \begin{aligned}
        &0=D_A\Delta A - A+\alpha+A\rho, &&\text{ in } \Omega, \\
             &0=\nabla \cdot \big(D_\rho \nabla \rho- 2 \rho \nabla \ln A\big)-A\rho+\beta - \rho u, &&\text{ in } \Omega, \\
        &0=\nabla \cdot \big(D_u \nabla u-\chi u\nabla \ln A\big), &&\text{ in } \Omega, \\
         &\frac{\partial A}{\partial \textbf{n}}= \frac{\partial \rho}{\partial \textbf{n}}= \frac{\partial u}{\partial \textbf{n}}=0, &&\text{ on }\partial\Omega.
    \end{aligned}
    \right.
\end{equation}

The interaction between spatial heterogeneity and nonlinear coupling creates significant mathematical challenges, and a deeper analysis is needed to fully understand the resulting dynamics. While prior studies in \cite{buttenschoen2020cops, gai2024nucleation, rodriguez2021understanding, tse2018asynchronous, zipkin2014cops} have shown the existence of spatially heterogeneous solutions, most focus on one-dimensional domains or simplified models, leaving open the problem of rigorous higher-dimensional analysis. This work addresses that gap by extending the framework of \cite{rodriguez2021understanding} to higher dimensions and by developing a general approach for bifurcation analysis in three-agent systems on spatial domains.
In particular, we establish the Crandall--Rabinowitz Theorem and stability results for systems with guardians in any spatial dimension, using Laplacian eigenvalues rather than wave-mode numbers.  Our proofs are given in explicit detail, making the methods adaptable to other three-agent systems. We also generalize the Hopf bifurcation Theorem to higher dimensions.
On the computational side, we provide parameter-based verification guided by linear stability analysis and include simulations in both one- and two-dimensional dimensions. These simulations highlight features such as long transients and consistent behaviors across dimensions. We further demonstrate chaotic dynamics using bifurcation diagrams and perform a systematic parameter sweep to explore how parameter magnitudes affect system outcomes. Together, these results provide a comprehensive and general treatment of the target-guardian-offender model.

\textit{Outline:} 

This work develops its analysis through three sequential components: first, a linear stability examination of the constant solution \eqref{eq:CSS} establishes the baseline conditions for pattern formation; second, a steady state bifurcation analysis reveals how intensive guardians drive the emergence of nontrivial solutions from this trivial state; and third, a Hopf bifurcation analysis locating the onset of oscillatory behavior and proving the existence of periodic solutions. Finally, we complement our theoretical results with numerical simulations that not only validate our predictions but also reveal complex, chaotic dynamics. 
These simulations highlight the rich pattern formation potential of the extended system and demonstrate behaviors not previously captured in earlier models. 

\section{Steady State Bifurcation Analysis}
In this section, we perform a bifurcation analysis to demonstrate the existence of non-constant steady-state solutions, thereby showing that the system can generate stable spatial patterns under suitable conditions.  
For the sake of simplicity, we shift the scalar fields in system \eqref{eq:E} using the steady state \eqref{eq:CSS} by setting 
\begin{equation*}
    \tilde A(\textbf{x}, t):=A(\textbf{x}, t)-\bar A, \quad 
    \tilde \rho(\textbf{x}, t):=\rho-\bar \rho, \quad 
    \tilde u(\textbf{x}, t):=u(\textbf{x}, t)-\bar u. 
\end{equation*}
and collect the following the equivalent system for $(\bar{A}, \bar{\rho}, \bar{u})$
\begin{equation*}\label{eq:ME}
\tag{ME}\left\{
\begin{aligned}
     \frac{\partial \tilde{A}}{\partial t}=&D_A\Delta \tilde{A} - \tilde{A}+\tilde{A}\tilde{\rho} +\rbar\tilde{A}+ \Abar\tilde{\rho}+\Abar\rbar+\alpha, &&\text{ in }\Omega\times(0, T], \\
     \frac{\partial \tilde{\rho}}{\partial t}=&\nabla\cdot(\nabla D_\rho \tilde{\rho} - \frac{2(\tilde{\rho}+\rbar)}{\tilde{A}+\Abar}\nabla \tilde{A})&&\\
     &- \tilde{A} \tilde{\rho}-\Abar \tilde{\rho}-\rbar\tilde{A}-\tilde{\rho}\tilde{u}-\ubar\tilde{\rho}-\rbar\tilde{u}+\beta-\rbar(\ubar+\Abar), &&\text{ in }\Omega\times(0, T], \\
    \frac{\partial \tilde{u}}{\partial t}=&\nabla\cdot(\nabla D_u \tilde{u} - \frac{\chi(\tilde{u}+\ubar)}{\tilde{A}+\Abar}\nabla \tilde{A}), &&\text{ in }\Omega\times(0, T], \\
    \frac{\partial \tilde A}{\partial \textbf{n}}=& \frac{\partial \tilde \rho}{\partial \textbf{n}}= \frac{\partial \tilde u}{\partial \textbf{n}}=0, &&\text{ on }\partial\Omega.
\end{aligned}
\right.
\end{equation*}
This new system \eqref{eq:ME} has a shifted equilibrium $\mathbf{0}:=(0, 0, 0)$, and  dropping tildes we obtain
\begin{equation}\label{eq:MP}
\tag{MP}
\left\{
\begin{aligned}
     \frac{\partial A}{\partial t}=&D_A\Delta A - A +A\rho +\left(1-\frac{\alpha}{\lambda}\right)A+\lambda \rho, &&\quad\text{ in }\Omega\times(0, T], \\
     \frac{\partial \rho}{\partial t}=&\nabla\cdot\Big(\nabla D_\rho \rho - \frac{2\left(\rho+\left(1-\frac{\alpha}{\lambda}\right)\right)}{A+\lambda}\nabla A\Big)- A \rho-\lambda \rho &&\\
     &-\left(1-\frac{\alpha}{\lambda}\right)A-\rho u-\rho\lambda\left(\frac{\beta}{\lambda-\alpha}-1\right)-u\left(1-\frac{\alpha}{\lambda}\right), &&\quad\text{ in }\Omega\times(0, T], \\
    \frac{\partial u}{\partial t}=&\nabla\cdot\Big(\nabla D_u u - \frac{\chi\left(u+\lambda\left(\frac{\beta}{\lambda-\alpha}-1\right)\right)}{A+\lambda}\nabla A\Big), &&\quad\text{ in }\Omega\times(0, T], \\
    \frac{\partial A}{\partial \textbf{n}}=& \frac{\partial \rho}{\partial \textbf{n}}= \frac{\partial u}{\partial \textbf{n}}=0, &&\quad\text{ on }\partial\Omega.
\end{aligned}
\right.
\end{equation}
We now aim to study the steady-state solutions of system \eqref{eq:MP}, and to this end, we introduce the operator $\mathcal F$ as follows
\begin{equation}\label{eq:F}
\begin{split}
    &\mathcal F(\chi, A, \rho, u)\\
    :=&\begin{bmatrix}
        D_A\Delta A+A\rho -\frac{\alpha}{\lambda}A+\lambda \rho\\
     \nabla\cdot\left[\nabla D_\rho \rho - \frac{2\left(\rho+\left(1-\frac{\alpha}{\lambda}\right)\right)}{A+\lambda}\nabla A\right]- A \rho-\lambda \rho-\left(1-\frac{\alpha}{\lambda}\right)A-\rho u-\rho\lambda\left(\frac{\beta}{\lambda-\alpha}-1\right)-u\left(1-\frac{\alpha}{\lambda}\right)\\
    \nabla\cdot\left[\nabla D_u u - \frac{\chi\left(u+\lambda\left(\frac{\beta}{\lambda-\alpha}-1\right)\right)}{A+\lambda}\nabla A\right]
    \end{bmatrix}.
\end{split}
\end{equation}

Then $\mathcal F(\chi, \mathbf{0})=0$ for any $\chi\in\mathbb{R}$, and we are motivated to find nontrivial equilibria of \eqref{eq:MP} that bifurcate from the trivial one.  It is equivalent to solving the following problem
\begin{equation}\label{eq:EP}\tag{EP}
\mathcal F(\chi, A, \rho, u)=0, \quad \chi \in \mathbb R,\quad (A, \rho, u)\in \mathcal Y, 
\end{equation}
where 
\begin{equation*}
   \mathcal Y=\left\{(A, \rho, u)\in[H^2(\Omega)]^3\Big|\;\frac{\partial A}{\partial \textbf{n}}=\frac{\partial \rho}{\partial \textbf{n}}=\frac{\partial u}{\partial \textbf{n}}=0\text{ on }\partial\Omega, \, \int_\Omega u d\textbf{x}=0\right\}.
\end{equation*}
Indeed, any root of $\mathcal{F} = 0$ is known to be a weak solution of \eqref{eq:F}, and standard elliptic regularity theory then ensures that this weak solution is, in fact, classical and smooth.

We will show that the conditions of Theorem 1.7 in \cite{crandall1971bifurcation} are satisfied for $\mathcal{F}=F$ in \eqref{eq:EP}, $\mathcal{Y}=Y$, $\mathcal{Z}=Z$, and $\mathcal{V}=V$ defined as follows

\begin{equation*}
    V=\left\{(\chi, A, \rho, u)\in\mathbb{R}\times \mathcal Y|\;
    A>-\lambda, \, \rho>\frac{\alpha}{\lambda}-1, \, u>\lambda \left(1-\frac{\beta}{\lambda-\alpha}\right)\right\}, 
\end{equation*}
and $Z:=[L^2(\Omega)]^3$.
\begin{remark}
Whenever we refer to Theorem 1.7 in \cite{crandall1971bifurcation}, we mean its extension given in Theorem 4.3 of \cite{shi2009global}. Part (e), though not stated explicitly in either Theorem, concerns the global extension, where one shows that the derivative is a Fredholm operator. For the complete statement of the extended Theorem, we refer the reader to the cited papers.
\end{remark}

Now that $\mathcal F(\chi, \textbf{0})=0$ holds for all $\chi\in\mathbb R$, to look for nonconstant solutions, one first needs the implicit function Theorem to fail at this trivial location.  For this purpose, we consider the following eigenvalue problem
\begin{equation}\label{eq:LN}\tag{LN}
\Delta \phi+\mu \phi =0 \text{ in }\Omega, \text{ with }
\partial_\textbf{n} \phi=0 \text{ on }\partial\Omega  \quad  \text{ and } \Vert \phi \Vert_{L^2(\Omega)}=1.  
\end{equation}
It is well known that its eigen-pairs $<\mu_j, \phi_j>$ consist of a sequence of real, non-negative, and diverging eigenvalues:
\[0 = \mu_0 < \mu_1 \leq \mu_2 \leq \dots \to \infty, \]
and the eigenfunction set \( \{\phi_j\}_{j=0}^\infty \subset H^1(\Omega) \) form a complete orthonormal basis of \( L^2(\Omega) \).  To apply the bifurcation from simple eigenfunctions, we assume that $\mu$ has algebraic multiplicity one.  Then $y_0$ in Theorem 1.7 in \cite{crandall1971bifurcation} is relabeled as $\textbf{P}(\textbf{x}) $ and defined as follows
\begin{equation}\label{Pvec}
\begin{split}
     \textbf{P}(\textbf{x}) =\left(1, \frac{1}{\lambda}\left( D_A\mu+\frac{\alpha}{\lambda}\right), \frac{(2\mu-\lambda)(\lambda-\alpha)^2-(D_A\lambda\mu+\alpha)(D_\rho\mu(\lambda-\alpha)+\beta\lambda)}{\lambda(\lambda-\alpha)^2}\right)\phi(\textbf{x}).
\end{split}
\end{equation}

The associated linear functional on $Y$ is then
\begin{equation*}
    \ell\varphi=\int_\Omega \varphi\cdot \textbf{P}(\textbf{x}), 
\end{equation*}
which allows us to define the closed complement $\mathcal{W}=W$
\begin{equation*}
    W=\{\varphi\in Y|\ell\varphi=0\}.
\end{equation*}
Using the definitions above and Theorem 1.7 in \cite{crandall1971bifurcation}, we extend Theorem 2.2 and Theorem 2.6 in \cite{cantrell2012global} as follows. 
\begin{theorem}\label{1stthm}
Let $\mu>0$ be a generic and simple eigenvalue of \eqref{eq:LN}.  Denote
\begin{equation}\label{eq:chisol}
    \chi=
    \overbrace{    -\frac{D_u}{(\beta+\alpha-\lambda)}\left[\left(\lambda-\alpha+\frac{\alpha\beta}{\lambda-\alpha}\right)+\mu\left(-2\left(1-\frac{\alpha}{\lambda}\right)+\frac{D_\rho \alpha}{\lambda}+\frac{D_A\beta\lambda}{\lambda-\alpha}\right)+\mu^2D_AD_\rho\right]}^{\triangleq \chi_0}
\end{equation}
and assume that 
\begin{align*}
    \frac{(D_u-\chi_0)(\alpha^2+\alpha(\beta-2\lambda)+\lambda^2)-\beta\lambda\chi_0}{\mu(\lambda-\alpha)}\
\end{align*}
is not eigenvalue of \eqref{eq:LN}.  Then a branch of spatially nonconstant solutions of $\mathcal F$ bifurcates from the equilibrium $(\bar{A}, \rbar, \bar{u})$ at $\chi=\chi_0$. In a neighborhood of the bifurcation point, the bifurcating branch can be parametrized as  
     $(A, \rho, u)(\textbf{x};s)=(\bar{A}, \rbar, \bar{u})+s\textbf{P}(\textbf{x})+s^2\textbf{Q}(\textbf{x};s)$ with $\textbf{P}$ given by \eqref{Pvec} and $\textbf{Q}(\textbf{x};s)\in W$ for any $s\in(-\epsilon, \epsilon)$; and where $\chi(0)=\chi_0$ and $\xi(0)=0$. Furthermore, the bifurcation branch is part of a connected component $\mathcal{C}$ of the set $\bar{\mathcal{S}}$, where $\mathcal{S}=\{(\chi, A, \rho, u):(\chi, A-\bar{A}, \rho-\rbar, u-\bar{u})\in V\, |\, \mathcal F(\chi, A-\bar{A}, \rho-\rbar, u-\bar{u})=0, (A, \rho, u)\not=(\bar{A}, \rbar, \bar{u})\}$, and $\mathcal{C}$ either is not compact or contains point $(\chi^*, \bar{A}, \rbar, \bar{u})$ with $\chi^*\not=\chi_0$.
\end{theorem}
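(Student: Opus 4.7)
The plan is to verify the hypotheses of the Crandall--Rabinowitz theorem (Theorem 1.7 of \cite{crandall1971bifurcation}, in the global form of Theorem 4.3 of \cite{shi2009global}) for the map $\mathcal F \colon \mathbb R \times Y \to Z$ at the trivial branch $(\chi, \mathbf 0)$. Concretely, one must check: (i) $\mathcal F(\chi, \mathbf 0) \equiv 0$, immediate from \eqref{eq:F}; (ii) smoothness of $\mathcal F$ on $V$, which follows from the positivity constraints defining $V$ that keep the denominators $A+\lambda$ bounded away from zero; (iii) the linearization $\mathcal L_\chi := D_{(A,\rho,u)}\mathcal F(\chi, \mathbf 0)$ is Fredholm of index zero; (iv) $\ker \mathcal L_{\chi_0} = \mathrm{span}\{\mathbf P\}$ with $\mathbf P$ given by \eqref{Pvec}; and (v) the transversality condition $\partial_\chi \mathcal L_{\chi_0}\mathbf P \notin \mathrm{range}\,\mathcal L_{\chi_0}$. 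Once these are in place, the local parametrization follows from Crandall--Rabinowitz, and the global alternative for $\mathcal C$ follows from the Shi--Wang extension.

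I would begin by linearizing the advection and reaction terms of \eqref{eq:F} at the origin; the resulting operator $\mathcal L_\chi$ has constant coefficients with Neumann boundary data, so its action decouples over the orthonormal basis $\{\phi_j\}$ from \eqref{eq:LN}. Seeking kernel elements of the form $(c_1, c_2, c_3)\phi_j$ reduces $\mathcal L_\chi\bigl((c_1,c_2,c_3)\phi_j\bigr)=0$ to a $3\times 3$ algebraic system $M(\chi, \mu_j)\mathbf c = \mathbf 0$, whose determinant is polynomial in $\mu_j$ and linear in $\chi$. Setting $\det M(\chi, \mu)=0$ for the simple eigenvalue $\mu$ and solving for $\chi$ produces the scalar $\chi_0$ of \eqref{eq:chisol}; back-substituting into the first two rows of $M(\chi_0, \mu)$ then recovers the coefficient vector in \eqref{Pvec}. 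The Fredholm property (iii) follows from standard elliptic theory: $\mathcal L_\chi$ is a second-order strongly elliptic system with Neumann data, hence has closed range and finite-dimensional kernel and cokernel of equal dimension, and the integral constraint on $u$ used in defining $Y$ is preserved by the linearization.

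For simplicity of the kernel (iv), the first two rows of $M(\chi, \nu)$ do not involve $\chi$, so combined with the assumed simplicity of $\mu$ as a Laplacian eigenvalue they handle the $\phi$-mode; a kernel element in any other mode $\phi_j$ forces an additional, $\chi$-linear relation on $\mu_j$, and substituting $\chi = \chi_0$ shows any such $\mu_j$ must equal
\[
\frac{(D_u-\chi_0)(\alpha^2+\alpha(\beta-2\lambda)+\lambda^2)-\beta\lambda\chi_0}{\mu(\lambda-\alpha)},
\]
which is ruled out by hypothesis. Verifying the transversality condition (v) is the main technical obstacle: differentiating $\mathcal L_\chi$ in $\chi$ touches only the third equation, so $\partial_\chi \mathcal L_{\chi_0}\mathbf P$ is supported entirely in the third slot and is proportional to $\mu\phi$. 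I would compute a generator $\mathbf P^\ast$ of $\ker \mathcal L_{\chi_0}^\ast$ explicitly by the same eigen-expansion applied to the adjoint $3\times 3$ system, and then verify $\int_\Omega \partial_\chi \mathcal L_{\chi_0}\mathbf P \cdot \mathbf P^\ast\,d\mathbf x \neq 0$; this pairing reduces to a concrete algebraic nondegeneracy in $(\lambda, \alpha, \beta, D_A, D_\rho, D_u, \mu)$ which should follow from $\alpha < \lambda < \alpha+\beta$ and the explicit form of $\chi_0$. With (i)--(v) in hand, Crandall--Rabinowitz delivers the local parametrization $(A,\rho,u) = (\Abar, \rbar, \ubar)+s\mathbf P+s^2\mathbf Q(\cdot; s)$ with $\mathbf Q(\cdot; s)\in W$, and the Shi--Wang global alternative upgrades it to the connected component $\mathcal C \subset \bar{\mathcal S}$ with the stated noncompact-or-return dichotomy.
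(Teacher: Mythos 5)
Your proposal follows essentially the same route as the paper: expand over the Neumann eigenbasis of \eqref{eq:LN} to reduce the linearization to the $3\times 3$ matrix $H$ in \eqref{eq:Hmatrix}, obtain $\chi_0$ from $\det H=0$, use the excluded-eigenvalue hypothesis to keep the kernel one-dimensional, verify the Fredholm property by elliptic theory (the paper checks Agmon's condition), and then check transversality. The only cosmetic difference is that you verify transversality by pairing $\partial_\chi\mathcal L_{\chi_0}\mathbf P$ against a generator of the adjoint kernel, whereas the paper argues by contradiction through the singular matrix $H$; both reduce to the same nonvanishing of $\mu\left(\frac{\beta}{\lambda-\alpha}-1\right)$ against the left null vector of $H$, which holds since $\beta>\lambda-\alpha$.
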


\begin{remark}
In \cite{cantrell2012global}, the authors showed that if $\mathcal{C}$ is not compact, then it must be unbounded (see Remark 2.4). This implies that $\mathcal{C}$ either extends to infinity in one of the parameters $\chi$, $A$, $\rho$, or $u$, or else leaves $V$, in which case $\mathcal{C}$ contains a point on $\partial V$. This conclusion, established in their Lemma 2.5, relies on the theory developed in \cite{ladyzhenskaya1968}. The same lemma and conclusion can be applied here; the only distinction is that we must additionally show that $u$ is uniformly bounded in $C^{2,\vartheta}(\bar{\Omega})$. However, the proof of this step is identical to the argument used for $\rho$ in \cite{cantrell2012global}.
\end{remark} 

Mathematically, the value $\chi_0$ marks the point at which the homogeneous steady state loses stability and new nontrivial steady states emerge through bifurcation. From a criminological perspective, this corresponds to a tipping point in policing intensity: below $\chi_0$, police and offenders distribute uniformly in space; however, once $\chi_0$ is exceeded, stable, spatially inhomogeneous patterns arise, which may manifest as persistent crime hotspots despite ongoing policing efforts.

\subsection{Identification of bifurcation points}

To apply the Crandall--Rabinowitz bifurcation theory \cite{crandall1971bifurcation, shi2009global}, we first note that $\mathcal F$ is Fr\'echet differentiable and its functional derivative at $(A, \rho, u)$  with $\textbf{v}:=(v_1, v_2, v_3)$ is given by
\begin{equation}\label{eq:DFARU}
\begin{split}
&D\mathcal F_{(A, \rho, u)}(\chi, A, \rho, u)[\textbf{v}]\\
=&
\begin{bmatrix}
    D_A \Delta v_1 
      + \Bigl(\rho - \tfrac{\alpha}{\lambda}\Bigr)v_1  
      + (A+\lambda)\, v_2
    \\[1em]
    \nabla \cdot \Biggl[
        \nabla D_\rho v_2 
        - \tfrac{2v_2}{A+\lambda}\,\nabla A
        - 2\Bigl(\rho+1-\tfrac{\alpha}{\lambda}\Bigr)
          \left(
            \tfrac{\nabla v_1}{A+\lambda}
            - \tfrac{v_1 \nabla A}{(A+\lambda)^2}
          \right)
    \Biggr]
    \\[1em]
    \quad
    -\Bigl(\rho+1-\tfrac{\alpha}{\lambda}\Bigr)\, v_1
    - \left(A+u+\tfrac{\lambda\beta}{\lambda-\beta}\right) v_2
    - \Bigl(\rho+1-\tfrac{\alpha}{\lambda}\Bigr) v_3
    \\[1em]
    \nabla \cdot \Biggl[
        \nabla D_u v_3
        - \tfrac{\chi v_3}{A+\lambda}\,\nabla A
        - \chi\left(
            u + \lambda\Bigl(\tfrac{\beta}{\lambda-\alpha}-1\Bigr)
          \right)
          \left(
            \tfrac{\nabla v_1}{A+\lambda}
            - \tfrac{v_1 \nabla A}{(A+\lambda)^2}
          \right)
    \Biggr]
\end{bmatrix}.
\end{split}
\end{equation}
One can continue to show that $\frac{d}{d\chi}D\mathcal F_{(A, \rho, u)}(\chi, A, \rho, u)[\textbf{v}]$ exists and is continuous.

To fail the implicit function Theorem at $(\chi, \mathbf{0})$, we evaluate \eqref{eq:DFARU} the following way
\begin{equation}\label{eq:DFARU0}
    \begin{split}
        &D\mathcal F_{(A, \rho, u)}(\chi, \textbf{0})[\textbf{v}]\\
=&
    \begin{bmatrix}
        D_A\Delta v_1-\frac{\alpha}{\lambda}v_1+\lambda v_2\\[0.5em]
        D_\rho \Delta v_2-\frac{2}{\lambda}\left(1-\frac{\alpha}{\lambda}\right)\Delta v_1-\left(1-\frac{\alpha}{\lambda}\right)v_1-\frac{\lambda\beta}{\lambda-\alpha}v_2-\left(1-\frac{\alpha}{\lambda}\right)v_3\\[0.5em]
        D_u\Delta v_3-\chi \left(\frac{\beta}{\lambda-\alpha}-1\right)\Delta v_1
    \end{bmatrix}.
    \end{split}
\end{equation}

Then we are to look for nontrivial solutions of the following equation
\begin{equation*}\label{eq:LE}
    \tag{LE}
    \left\{
    \begin{aligned}
        &D_A\Delta v_1-\frac{\alpha}{\lambda}v_1+\lambda v_2=0, &&\quad\text{ in }\Omega, \\
        &D_\rho \Delta v_2-\frac{2}{\lambda}\left(1-\frac{\alpha}{\lambda}\right)\Delta v_1-\left(1-\frac{\alpha}{\lambda}\right)v_1-\frac{\lambda\beta}{\lambda-\alpha}v_2-\left(1-\frac{\alpha}{\lambda}\right)v_3=0, &&\quad\text{ in }\Omega, \\
        &D_u\Delta v_3-\chi \left(\frac{\beta}{\lambda-\alpha}-1\right)\Delta v_1=0, &&\quad\text{ in }\Omega, \\
         &\frac{\partial v_1}{\partial \textbf{n}}= \frac{\partial v_2}{\partial \textbf{n}}= \frac{\partial v_3}{\partial \textbf{n}}=0, &&\quad\text{ on }\partial\Omega.  
    \end{aligned}
    \right.
\end{equation*}

We test each equation with $\phi$ and then apply integration by parts to obtain
\begin{equation}\label{eq:Hmatrix}
\overbrace{
\begin{bmatrix}
        -D_A\mu -\frac{\alpha}{\lambda}&\lambda&0\\[0.5em]
        \left(\frac{2\mu}{\lambda}-1\right)\left(1-\frac{\alpha}{\lambda}\right)&-D_\rho\mu-\frac{\beta\lambda}{\lambda-\alpha}&-\left(1-\frac{\alpha}{\lambda}\right)\\[0.5em]
        \mu\chi\left(\frac{\beta}{\lambda-\alpha}-1\right)&0&-D_u\mu
    \end{bmatrix}
}^{\triangleq H}
    \begin{bmatrix}
        \int_\Omega v_1\phi \\
        \int_\Omega v_2\phi \\
        \int_\Omega v_3\phi
    \end{bmatrix}
    =    \begin{bmatrix}
        0 \\
        0 \\
        0
    \end{bmatrix}
    .
\end{equation}
Then the existence of nontrivial solutions of \eqref{eq:LE} will in turn give us nontrivial solutions for \eqref{eq:Hmatrix}.  That said, matrix $H$ must be non-invertible such that  
\[-D_u\mu\left(D_A\mu+\frac{\alpha}{\lambda}\right)\left(D_\rho\mu+\frac{\beta\lambda}{\lambda-\alpha}\right)+ D_u\lambda\mu\left(\frac{2\mu}{\lambda}-1\right)\left(1-\frac{\alpha}{\lambda}\right) -\lambda\mu\chi\left(1-\frac{\alpha}{\lambda}\right)\left(\frac{\beta}{\lambda-\alpha}-1\right)=0\]
or equivalently when $\chi$ is given by \eqref{eq:chisol}.

This analysis identifies the critical bifurcation values of $\chi$, representing the policing intensity, that mark the transition from homogeneous to patterned states, as determined by the linear stability analysis of the constant solution \eqref{eq:CSS}. Moreover, when $\chi = \chi_0$, it follows directly from \eqref{eq:Hmatrix} that the kernel of $D\mathcal{F}(\chi_0, \mathbf{0})$ is one-dimensional and is spanned by $\mathbf{P}$ defined in \eqref{Pvec}.

To proceed further, we verify Agmon's condition following \cite{cantrell2012global, shi2009global}.  Define the principal part of the elliptic operator $F(\chi, A, \rho, u)$ and $D\mathcal F_{(A, \rho, u)}(\chi, A, \rho, u)$
\begin{equation*}
    A_1(\chi, A, \rho, u)=\begin{bmatrix}
        D_A&0&0\\[0.5em]
        -2\frac{\rho+\left(1-\frac{\alpha}{\lambda}\right)}{A+\lambda}&D_\rho&0\\[0.5em]
        -2\frac{u+\lambda\left(\frac{\beta}{\lambda-\alpha}-1\right)}{A+\lambda}&0&D_\rho
    \end{bmatrix}.
\end{equation*}
More importantly, we need to look at the determinant  $A_1(\chi, A, \rho, u)+\sigma I$, $\sigma\in \mathbb{C}$
\begin{equation*}
    \begin{vmatrix}
        D_A+\sigma&0&0\\[0.5em]
        -2\frac{\rho+\left(1-\frac{\alpha}{\lambda}\right)}{A+\lambda}&D_\rho+\sigma&0\\[0.5em]
        -\chi\frac{u+\lambda\left(\frac{\beta}{\lambda-\alpha}-1\right)}{A+\lambda}&0&D_u+\sigma
    \end{vmatrix}=(D_A+\sigma)(D_\rho+\sigma)(D_u+\sigma).
\end{equation*}
The later is nonzero when $\sigma=0$ or $\arg \sigma\in[-\pi/2, \pi/2]$. This, in turn, implies that the operator $D\mathcal F_{(A, \rho, u)}(\chi, A, \rho, u)$ is a Fredholm operator with index 0 and $\text{Range}(D\mathcal F_{(A, \rho, u)}(\chi_0, \mathbf{0}))$ has a codimension of 1.

\subsection{Transversality and steady-state bifurcation}

We now verify that bifurcation does occur at $\chi_0$ given by \eqref{eq:chisol} by proving the following transversality condition 
\[\frac{d}{d\chi}D\mathcal F_{(A, \rho, u)}(\chi_0, \mathbf{0})[\textbf{P}(\textbf{x}) ]\not\in\\\text{Range}(D\mathcal F_{(A, \rho, u)}(\chi_0, \mathbf{0})).\] 
We argue by contradiction and assume that there exists nonzero $\mathbf{\hat v}:=(\hat v_1, \hat v_2, \hat v_3)$ such that $$D\mathcal F_{(A, \rho, u)} (\chi, \mathbf{0})[\mathbf{\hat v}]=\frac{d}{d\chi}D\mathcal F_{(A, \rho, u)}(\chi, \mathbf{0})[\textbf{P}(\textbf{x}) ].$$  Since
\begin{align*}\frac{d}{d\chi}D\mathcal F_{(A, \rho, u)}(\chi_0, \mathbf{0})[\textbf{P}(\textbf{x}) ]=\begin{bmatrix}
        0\\
        0\\
        \mu\left(\frac{\beta}{\lambda-\alpha}-1\right)\phi
    \end{bmatrix}, 
\end{align*}
we collect the following system
\begin{equation}\label{eq:S1}\tag{\text{$S_1$}} 
\left\{
\begin{aligned}
        &D_A\Delta \hat v_1-\frac{\alpha}{\lambda}\hat v_1+\lambda \hat v_2=0, &&\text{in }\Omega, \\
        &D_\rho \Delta \hat v_2-\frac{2}{\lambda}\left(1-\frac{\alpha}{\lambda}\right)\Delta \hat v_1-\left(1-\frac{\alpha}{\lambda}\right)\hat v_1-\frac{\lambda\beta}{\lambda-\alpha}\hat v_2-\left(1-\frac{\alpha}{\lambda}\right)\hat v_3
        =0, &&\text{in }\Omega, \\
        &D_u\Delta \hat v_3-\chi_0 \left(\frac{\beta}{\lambda-\alpha}-1\right)\Delta \hat v_1=\mu\left(\frac{\beta}{\lambda-\alpha}-1\right)\phi, &&\text{in }\Omega, \\
        &\frac{\partial \hat v_1}{\partial \textbf{n}}=\frac{\partial \hat v_2}{\partial \textbf{n}}=\frac{\partial \hat v_3}{\partial \textbf{n}}=0, &&\text{on }\partial \Omega.  
\end{aligned}
\right.
\end{equation}

We test \eqref{eq:S1} by $\phi$ and then collect the following identity using the fact that $\Vert \phi\Vert_{L^2}=1$ 
\begin{equation*} 
\begin{bmatrix}
        -D_A\mu -\frac{\alpha}{\lambda}&\lambda&0\\[0.5em]
        \left(\frac{2\mu}{\lambda}-1\right)\left(1-\frac{\alpha}{\lambda}\right)&-D_\rho\mu-\frac{\beta\lambda}{\lambda-\alpha}&-\left(1-\frac{\alpha}{\lambda}\right)\\[0.5em]
        \mu\chi_0\left(\frac{\beta}{\lambda-\alpha}-1\right)&0&-D_u\mu
    \end{bmatrix} 
    \begin{bmatrix}
        \int_\Omega \hat v_1\phi \\
        \int_\Omega \hat v_2\phi \\
        \int_\Omega \hat v_3\phi
    \end{bmatrix}
    =    \begin{bmatrix}
        0 \\
        0 \\
\mu\left(\frac{\beta}{\lambda-\alpha}-1\right)
    \end{bmatrix}
    .
\end{equation*}
However, this reaches a contradiction to \eqref{eq:Hmatrix} since $H$ therein is non-invertible with $\chi=\chi_0$.  This verifies the transversality condition, and Theorem \ref{1stthm} follows from those in \cite{crandall1971bifurcation}. 

\section{Stability Analysis of the Bifurcating Steady States}

In this section, we establish conditions for the stability of spatially heterogeneous patterns. Unlike previous analyses, our approach explicitly incorporates the geometry of the domain. Nevertheless, the results remain applicable to a broad class of domains, including discs. To this end, we introduce the following cubic equation in $r$, whose derivation will be presented later, but is needed here for the precise formulation of the Theorem:
\begin{equation}\label{eq:a2a1a0=0}
    \begin{split}
        0=&\biggl\{\left(\frac{\alpha}{\lambda}+\frac{\beta\lambda}{\lambda-\alpha}\right)\left[\left(\lambda-\alpha\right)+\frac{\alpha\beta}{\lambda-\alpha}\right]\biggr\}\\
        +&r\biggl\{\left(D_A+D_\rho+D_u\right)\left(\left(\lambda-\alpha\right)+\frac{\alpha\beta}{\lambda-\alpha}\right)\\
        &+\left(\frac{\alpha}{\lambda}+\frac{\beta\lambda}{\lambda-\alpha}\right)\left[-2\left(1-\frac{\alpha}{\lambda}\right)+\frac{D_\rho\alpha}{\lambda}+\frac{D_A\beta\lambda}{\lambda-\alpha}+D_u\left(\frac{\beta\lambda}{\lambda-\alpha}+\frac{\alpha}{\lambda}\right)\right]\\
        &+\mu\left[\left(-2\left(1-\frac{\alpha}{\lambda}\right)+\frac{ D_\rho\alpha}{\lambda}+\frac{D_A\beta\lambda}{\lambda-\alpha}\right)+\mu D_AD_\rho\right]\biggr\}\\
        +&r^2\biggl\{\left(D_A+D_\rho\right)\left[-2\left(1-\frac{\alpha}{\lambda}\right)+\frac{D_\rho\alpha}{\lambda}+\frac{D_A\beta\lambda}{\lambda-\alpha}\right]+\left[\left(D_A+D_\rho+D_u\right)\left(\frac{\beta\lambda}{\lambda-\alpha}+\frac{\alpha}{\lambda}\right)\right]\\
        &+\left(\frac{\alpha}{\lambda}+\frac{\beta\lambda}{\lambda-\alpha}\right)\left(D_AD_\rho+D_AD_u+D_\rho D_u\right)\biggr\}\\ +&r^3\biggl\{\left(D_A+D_\rho\right)\left(D_A+D_u\right)\left(D_\rho+D_u\right)\biggr\}\\
        =&C\left(r-r_1\right)\left(r-r_2\right)\left(r-r_3\right), 
    \end{split}
\end{equation}
where $r_i$ denote the roots of \eqref{eq:a2a1a0=0}. At least one of these roots must be negative; without loss of generality, we denote this root by $r_3$.

\begin{theorem}\label{2ndthm}
    Suppose that $\mu$ is the eigenvalue of $-\Delta$ appearing in Theorem \ref{1stthm} with normalized eigenfunction $\phi$. Suppose that the hypotheses of Theorem \ref{1stthm} are satisfied. Assume that $\int_\Omega\phi^3\not=0$ and $\beta\not=-C_2/C_1$, where
    \begin{equation*}
    \begin{split}
        C_1=- \lambda  \left(\chi _0-D_u\right) \left(-\alpha  \chi _0+2 D_A D_u\lambda  \mu  +2D_u \alpha  +\lambda  \chi _0\right), 
    \end{split}
\end{equation*}
\begin{equation*}
\begin{split}
        C_2=&-(\lambda -\alpha) \Biggl\{\lambda  \biggl[-2  D_u^2 \mu \Bigl(D_A (\mu  D_{\rho }-\lambda +\mu )-1\Bigr)+D_u\lambda  \chi_0  (1-2 \mu  D_A)-\lambda  \chi _0^2\biggr]\\
        &+\alpha  \biggl[2 D_u^2 \Bigl(\lambda -\mu  (D_{\rho}+2)\Bigr)-3D_u \lambda  \chi_0 +\lambda  \chi_0^2\biggr]\Biggr\}.
\end{split}
\end{equation*}
Suppose that there are no eigenvalue of $-\Delta$, $\nu$, such that $(\max\{0, \theta-\mu\}<\nu< \max\{\mu, \theta-\mu\})$, where
\begin{equation*}
\theta=\frac{2\lambda(\lambda-\alpha)-\alpha(\lambda-\alpha)(2+D_\rho)-D_A\beta\lambda^2}{D_AD_\rho\lambda(\lambda-\alpha)}.
\end{equation*}
If $r_1$ and $r_2$ are both positive roots of \eqref{eq:a2a1a0=0}, further suppose that, $\mu$ (and $\theta-\mu$ in case it is an eigenvalue of $-\Delta$) are outside of the interval $(\min\{r_1, r_2\}, \max\{r_1, r_2\})$.  Then the branch $(\chi(s), A(s), \rho(s), u(s))$ of solution of \eqref{eq:EQ} bifurcating from $(\chi_0, \Abar, \rbar, \ubar)$ is stable for sufficiently small and non-zero $|s|$. 
\end{theorem}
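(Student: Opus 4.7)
The plan is to apply the exchange-of-stability argument for bifurcations from a simple eigenvalue, in the Fredholm framework of \cite{crandall1971bifurcation, shi2009global} used in Section 2, to the bifurcating family $(\chi(s), A(s), \rho(s), u(s))$ produced by Theorem \ref{1stthm}. Since $D\mathcal F(\chi_0, \mathbf{0})$ has already been shown to be Fredholm of index zero with one-dimensional kernel spanned by $\mathbf{P}(\mathbf{x})$ and range of codimension one, zero is a simple isolated eigenvalue; analytic perturbation theory then produces a $C^1$ family of eigenvalues $\xi(s)$ of $D\mathcal F(\chi(s), A(s),\rho(s),u(s))$ with $\xi(0)=0$ and eigenfunctions $\Psi(s) = \mathbf{P} + s\mathbf{R}(s)$, whose sign for small $|s|$ is controlled by the standard identity $\xi(s) \sim -c_0\, s\,\chi'(s)$ with $c_0 > 0$.

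The first main step is to extract $\chi'(0)$ by Lyapunov--Schmidt reduction. Inserting the expansions $(A(s),\rho(s),u(s)) = s\mathbf{P} + s^2\mathbf{Q}_2 + \cdots$ and $\chi(s) = \chi_0 + s\chi_1 + \cdots$ into $\mathcal F(\chi(s),\cdot) = 0$ and projecting the order-$s^2$ equation onto the cokernel, the quadratic nonlinearities of $\mathcal F$ --- the product $A\rho$ together with the chemotactic divergences $\nabla\!\cdot\!\bigl((\rho+\rbar)\nabla A/(A+\lambda)\bigr)$ and $\nabla\!\cdot\!\bigl(\chi(u+\ubar)\nabla A/(A+\lambda)\bigr)$ --- contribute integrands proportional to $\phi^3$ after integration by parts. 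Collecting coefficients produces a solvability relation of the schematic form
\[
\chi_1 \cdot (\text{nonzero prefactor}) \;=\; -\bigl(\beta\,C_1 + C_2\bigr)\int_\Omega \phi^3\, d\mathbf{x},
\]
so the hypotheses $\int_\Omega \phi^3 \neq 0$ and $\beta \neq -C_2/C_1$ are precisely what force $\chi'(0) \neq 0$ and, consequently, a strict sign of $\xi(s)$ on the appropriate side of $s=0$.

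The second main step is to ensure that every other eigenvalue of the linearization remains in the open left half-plane for small $|s|$. At $s=0$ the operator is block-diagonal in the Neumann eigenmodes $\{\phi_j\}$, and each block is a $3\times 3$ matrix $H(\chi_0,\mu_j)$ obtained from \eqref{eq:Hmatrix} with $\mu$ replaced by $\mu_j$. Routh--Hurwitz stability of this block is controlled by (i) the sign of its constant coefficient, governed by the threshold $\theta$, and (ii) the sign of $a_2a_1 - a_0$, which is precisely the cubic \eqref{eq:a2a1a0=0} with roots $r_1, r_2, r_3$. The spectral-gap hypotheses --- that no Neumann eigenvalue $\nu$ lies in $(\max\{0,\theta-\mu\}, \max\{\mu,\theta-\mu\})$ nor, when $r_1, r_2 > 0$, in $(\min\{r_1,r_2\}, \max\{r_1,r_2\})$ --- therefore guarantee that $H(\chi_0,\mu_j)$ is strictly Routh--Hurwitz stable for every $\mu_j \neq \mu$. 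Continuity of the spectrum in $s$ preserves this property for small $|s|$, and combined with the sign of $\xi(s)$ yields the full spectral stability of the branch.

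The main technical obstacle is the second-order Lyapunov--Schmidt calculation itself: one must invert $D\mathcal F(\chi_0, \mathbf{0})$ on the range to obtain $\mathbf{Q}_2 \in W$, and the chemotactic quadratic terms couple all three components through gradients of $\phi$ and of $1/(A+\lambda)$, producing many cross terms that have to be collapsed into the compact coefficient $\beta C_1 + C_2$ with $C_1, C_2$ in the precise form stated in the theorem. A single misplaced sign at this step would invalidate the nondegeneracy condition. Once this identity is in hand and the Routh--Hurwitz spectral-gap conditions handle the transverse modes, the conclusion follows directly from the exchange-of-stability principle.
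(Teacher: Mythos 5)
Your proposal follows essentially the same route as the paper: exchange of stability via the Crandall--Rabinowitz eigenvalue perturbation results, extraction of $\chi'(0)$ from the order-$s^2$ solvability condition (the paper differentiates $\mathcal F(\chi(s),\cdot)=0$ twice, tests with $\phi$, and pairs with the left null vector of $H$, which is the same computation as your cokernel projection and yields the same relation $(\text{const})\,\chi'(0)=(C_1\beta+C_2)\int_\Omega\phi^3$), and a mode-by-mode Routh--Hurwitz analysis of the $3\times3$ blocks in which $\theta$ controls the sign of $a_0$ and the cubic \eqref{eq:a2a1a0=0} is exactly $a_2a_1-a_0$. No substantive differences.
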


\begin{proof}

We proceed by following the approach outlined in \cite{cantrell2012global}, where the authors apply the bifurcation theory developed in \cite{crandall1973bifurcation}.  The stability of the bifurcation branch $(\chi(s), A(s), \rho(s), u(s))$ of \eqref{eq:EP} (or equivalently \eqref{eq:EQ}, by translation) is determined by the eigenvalues of $D\mathcal{F}{(A,\rho,u)}(\chi(s), A(s), \rho(s), u(s))$.
In particular, the branch is stable if all eigenvalues have negative real parts. Theorem \ref{1stthm} implies that
$D\mathcal{F}{(A,\rho,u)}(\chi_0, \mathbf{0})$
has $0$ as an $i$-simple eigenvalue, where $i$ is an injection. Moreover, Corollary 1.13 in \cite{crandall1973bifurcation} ensures the existence of intervals $I$ and $J$ with $\chi_0 \in I$ and $0 \in J$, together with continuously differentiable functions
$\gamma : I \to \mathbb{R}, \quad
\sigma : J \to \mathbb{R}, \quad
[\mathbf{v}] : I^3 \to Y$ and $
(w_1, w_2, w_3): J^3 \to Y$ such that 
\begin{equation*}
    D\mathcal F_{(A, \rho, u)}(\chi, \mathbf{0})[\textbf{v}]^T=\gamma(\chi)[\textbf{v}]^T, 
\end{equation*}
\begin{equation*}
    D\mathcal F_{(A, \rho, u)}(\chi(s), A(s), \rho(s), u(s))(w_1, w_2, w_3)^T=\sigma(s)(w_1, w_2, w_3)^T, 
\end{equation*}
such that 
\begin{equation*}
        \chi(0)=\chi_0, \quad\gamma(\chi_0)=\sigma(0)=0, \quad (v_1(\chi_0), v_2(\chi_0), v_3(\chi_0))=(w_1(0), w_2(0), w_3(0))=\textbf{P}(\textbf{x}), 
\end{equation*}
\begin{equation*}
            (v_1(\chi), v_2(\chi), v_3(\chi))-\textbf{P}(\textbf{x}) \in W, \quad(w_1(s), w_2(s), v_3(s))-\textbf{P}(\textbf{x}) \in W.
\end{equation*}

Theorem 1.16 in \cite{crandall1973bifurcation} states that under the assumptions and definitions of Corollary 1.13, $\sigma(s)$ has only zero at $s=0$ and the same sign as $s$ for $|s|>0$ for sufficiently small $|s|$ and whenever $\chi'(0)\not=0$. It follows that if $\gamma(\chi_0)=\sigma(0)$ is the eigenvalue of \eqref{eq:F} with the largest real part  at $(\chi_0, \mathbf{0})$, then again for sufficiently small nonzero $s$, $\sigma(s)$ is  eigenvalue of the linearization of \eqref{eq:F} at $(\chi(s), A(s), \rho(s), u(s))$ with the largest real part. In particular, if $\mathcal{R}e(\sigma(0))<0$, then $\mathcal{R}e(\sigma(s))<0$. 

Hence, it remains to show that, under assumptions of Theorem \ref{2ndthm}, (a) $\chi'(0) \neq 0$, and (b) $\sigma(0) = 0$ is the largest eigenvalue of the linearized operator $D\mathcal{F}_{(A,\rho,u)}(\chi,A,\rho,u)$ at $(\chi_0,\mathbf{0})$.

To this end, we write system \eqref{eq:EP} with the bifurcating solution as
\[\mathcal{F}(\chi(s), A(s), \rho(s), u(s)) = 0,\]
with $\mathbf{x}$ omitted for notational simplicity.  We then differentiate twice with respect to $s$ at $s = 0$ to obtain. 
\begin{equation*}
    \begin{split}
        D_A\Delta A''(0)+2A'(0)\rho'(0)-\frac{\alpha}{\lambda}A''(0)+\lambda \rho''(0)=0,
    \end{split}
\end{equation*} 
\begin{equation*}
    \begin{split}
        &D_\rho\Delta \rho''\left(0\right)-\frac{2}{\lambda}\left(1-\frac{\alpha}{\lambda}\right)\Delta A''\left(0\right)-\nabla\cdot\left[\frac{4}{\lambda}\rho'\left(0\right)\nabla A'\left(0\right)\right]\\
    &\quad +\nabla\cdot\left[\frac{4}{\lambda^2}\left(1-\frac{\alpha}{\lambda}\right)A'\left(0\right)\nabla A'\left(0\right)\right]-2A'\left(0\right)\rho'\left(0\right)-\left(1-\frac{\alpha}{\lambda}\right)A''\left(0\right)\\
    &\quad-2\rho'\left(0\right)u'\left(0\right)-\frac{\lambda\beta}{\lambda-\beta}\rho''\left(0\right)-\left(1-\frac{\alpha}{\lambda}\right)u''\left(0\right)=0,
    \end{split}
\end{equation*}
and
\begin{equation*}
    \begin{split}
     &D_u\Delta u''\left(0\right)-\chi_0\left(\frac{\beta}{\lambda-\alpha}-1\right)\Delta A''\left(0\right)-\nabla \cdot \left[2\chi'\left(0\right)\left(\frac{\beta}{\lambda-\alpha}-1\right)\nabla A'\left(0\right)\right]\\
    &\quad-\nabla\cdot\left[2\chi_0\frac{1}{\lambda}u'\left(0\right)\nabla A'\left(0\right)\right]+\nabla\cdot\left[2\chi_0\frac{1}{\lambda}\left(\frac{\beta}{\lambda-\alpha}-1\right)A'\left(0\right)\nabla A'\left(0\right)\right]=0.   
    \end{split}
\end{equation*}

Multiply the above three equations by $\phi$ and integrate, and we should get the following
\begin{equation}\label{eq:Hint}
    H\left(\int_\Omega\phi A''(0), \int_\Omega\phi \rho''(0), \int_\Omega\phi u''(0)\right)^T+\widehat{h}=0, 
\end{equation}
where $\widehat{h}$ is the vector containing the lower derivative terms
\begin{equation*}
    \widehat{h}=\begin{bmatrix}
        \int_\Omega 2A'\left(0\right)\rho'\left(0\right)\phi\\[1em]
        \biggl\{\int_\Omega \biggl(-\nabla\cdot\left[\frac{4}{\lambda}\rho'\left(0\right)\nabla A'\left(0\right)\right] +\nabla\cdot\left[\frac{4}{\lambda^2}\left(1-\frac{\alpha}{\lambda}\right)A'\left(0\right)\nabla A'\left(0\right)\right]\\\quad-2A'\left(0\right)\rho'\left(0\right)-2\rho'\left(0\right)u'\left(0\right)\biggr)\phi\biggr\}\\[1em]
        \biggl\{\int_\Omega \biggl(-\nabla \cdot \left[2\chi'\left(0\right)\left(\frac{\beta}{\lambda-\alpha}-1\right)\nabla A'\left(0\right)\right]
    -\nabla\cdot\left[2\chi_0\frac{1}{\lambda}u'\left(0\right)\nabla A'\left(0\right)\right]\\
    \quad+\nabla\cdot\left[2\chi_0\frac{1}{\lambda}\left(\frac{\beta}{\lambda-\alpha}-1\right)A'\left(0\right)\nabla A'\left(0\right)\right]\biggr)\phi\biggr\}
    \end{bmatrix}.
\end{equation*}
We know that $\chi_0$ makes $H$ singular. The left eigenvector of $H$ corresponding to 0 is
\begin{equation*}
    \widehat{\textbf{P}(\textbf{x}) }=\left(-\frac{D_u\mu(\beta\lambda+ D_\rho\mu(\lambda-\alpha))}{(\lambda-\alpha)^2}, -\frac{D_u\lambda\mu}{\lambda-\alpha}, 1\right)^T.
\end{equation*}
We multiply \eqref{eq:Hint} by $\widehat{\textbf{P}(\textbf{x}) }$ from the left
\begin{equation*}
\begin{split}
        \widehat{\textbf{P}(\textbf{x}) }^TH\left(\int_\Omega\phi A''(0), \int_\Omega\phi \rho''(0), \int_\Omega\phi u''(0)\right)^T+\widehat{\textbf{P}(\textbf{x}) }^T\widehat{h}=\widehat{\textbf{P}(\textbf{x}) }^T0\implies \widehat{\textbf{P}(\textbf{x}) }^T\widehat{h}=0.
\end{split}
\end{equation*}
 We expand $\widehat{\textbf{P}(\textbf{x}) }^T\widehat{h}$ and use fact that $(A'(0), \rho'(0), u'(0))=\textbf{P}(\textbf{x}) $. In addition, integration by parts and Green's first identity let us deal with gradients in the following way
\begin{equation*}
    \int_\Omega\phi(\nabla\phi\cdot\nabla\phi)=\frac{1}{2}\int_\Omega\nabla\phi \cdot \nabla(\phi^2)=-\frac{1}{2}\int_\Omega\phi^2\Delta\phi=\frac{\mu}{2}\int_\Omega\phi^3.
\end{equation*}
We collect terms containing $\chi'(0)$ on the left-hand side (where we also use $\int_\Omega\phi^2=1$) and those containing $\int_\Omega\phi^3$ on the right-hand side
\begin{equation*}
    2D_u\lambda^2(\lambda-\alpha)(\alpha+\beta-\lambda)\chi'(0)=(C_1\beta+C_2)\int_\Omega\phi^3, 
\end{equation*}
where
\begin{equation}\label{C1}
    \begin{split}
        C_1:=- \lambda  \left(\chi _0-D_u\right) \left(-\alpha  \chi _0+2 D_A D_u\lambda  \mu  +2D_u \alpha  +\lambda  \chi _0\right), 
    \end{split}
\end{equation}
and
\begin{equation}\label{C2}
\begin{split}
        C_2=&-(\lambda -\alpha) \Biggl\{\lambda  \biggl[-2  D_u^2 \mu \Bigl(D_A (\mu  D_{\rho }-\lambda +\mu )-1\Bigr)+D_u\lambda  \chi_0  (1-2 \mu  D_A)-\lambda  \chi _0^2\biggr]\\
        &+\alpha  \biggl[2 D_u^2 \Bigl(\lambda -\mu  (D_{\rho}+2)\Bigr)-3D_u \lambda  \chi_0 +\lambda  \chi_0^2\biggr]\Biggr\}.
\end{split}
\end{equation}
The coefficients on the left-hand side are non-zero by assumption on the parameters. Hence, $\chi'(0)\not=0$ if $\beta\not=-C_2/C_1$ and $\int_\Omega\phi^3\not=0$.  This completes part (a).

\begin{remark}
Note that $\chi_0$ depends on $\beta$ through its definition. Thus, determining $\beta^*$ such that $C_2 \beta^* + C_1 \neq 0$ and ensuring that $\beta$ is independent of $\chi_0$ reduces to solving a quadratic equation. The explicit expression for $\beta^*$ can be obtained from the discriminant; however, it is algebraically cumbersome and therefore omitted here. Moreover, since $\beta > \lambda - \alpha > 0$, one can derive sufficient conditions to guarantee either $-C_2/C_1 < \lambda - \alpha$ or $-C_2/C_1 < 0$.
\end{remark}
\begin{remark}
As noted in \cite{cantrell2012global}, on a disc domain, a transcritical bifurcation arises because the eigenfunctions of the Laplace operator involve Bessel functions, which lack the symmetry required to ensure $\int_\Omega \phi^3 = 0$.  Consequently, the analysis developed here remains valid in this setting. In contrast, for rectangular domains the solutions of the Helmholtz equation are products of cosine functions, which satisfy $\int_\Omega \phi^3 = 0$ (for example, $\int_{[0,\pi]\times[0,\pi]} \cos^3(nx)\cos^3(my) dxdy = 0$).  Therefore, the same bifurcation scenario does not directly apply in rectangular domains.  This observation is consistent with previous studies by \cite{short2010nonlinear, yerlanov2025}, which demonstrate that pitchfork bifurcations are expected in rectangles, where $\chi'(0) = 0$. To analyze the stability of these bifurcating branches, one must therefore examine higher-order terms, in particular $\chi''(0)$.
\end{remark}

We follow \cite{cantrell2012global} to verify part (b): we show that $\sigma(0)=0$ is the eigenvalue of $D\mathcal F_{(A, \rho, u)}(\chi, A, \rho, u)$ at $(\chi_0, \mathbf{0})$ with the largest real part. The corresponding eigenvalue problem is
\begin{equation*}
\left\{
    \begin{aligned}
        &D_A\Delta v_1-\frac{\alpha}{\lambda}v_1+\lambda v_2=\sigma v_1, &&\quad\text{ in }\Omega, \\
        &D_\rho \Delta v_2-\frac{2}{\lambda}\left(1-\frac{\alpha}{\lambda}\right)\Delta v_1-\left(1-\frac{\alpha}{\lambda}\right)v_1-\frac{\lambda\beta}{\lambda-\alpha}v_2-\left(1-\frac{\alpha}{\lambda}\right)v_3=\sigma v_2, &&\quad\text{ in }\Omega, \\
        &D_u\Delta v_3-\chi \left(\frac{\beta}{\lambda-\alpha}-1\right)\Delta v_1=\sigma v_3, &&\quad\text{ in }\Omega, \\
         &\frac{\partial v_1}{\partial \textbf{n}}= \frac{\partial v_2}{\partial \textbf{n}}= \frac{\partial v_3}{\partial \textbf{n}}=0, &&\quad\text{ on }\partial\Omega, 
    \end{aligned}
    \right.
\end{equation*}
which can be rewritten as follows for $3\times3$ matrices $P, \, Q$
\begin{equation}\label{eq:SigmaMat}
    P\Delta\vec{v}+Q\vec{v}=\sigma\vec{v}.
\end{equation}
By spectral Theorem, $\vec{v}=\sum_{i=1}^\infty \vec{v}_i \phi_i$, where $\phi_i$ is orthonormal eigenfunction of $-\Delta$ with homogeneous Neumann boundary conditions. If $\sigma$ is an eigenvalue, then for some $i$, $\vec{v}_i\not=0$. Integrate \eqref{eq:SigmaMat} by $\phi_i$, then we should get
\begin{equation*}
    -P\mu_i\vec{v}_i+Q\vec{v}_i=\sigma v_i.
\end{equation*}
Hence, $\sigma$ can be represented as an eigenvalue of $-P\mu_i +Q$ for some eigenvalue $\mu_i$ of $-\Delta$. On the other hand, any $\mu_i$ leads to 3 eigenvalues of \eqref{eq:SigmaMat} corresponding to the eigenvalues of $-P\mu_i+Q$. To avoid confusion, we denote the eigenvalue of $-\Delta$ that appears in Theorems \ref{1stthm}, \ref{2ndthm} as $\mu_*$. We drop the $i$ subscript, then at the bifurcation point, $-P\mu+Q=H$ with $\chi=\chi_0$. We analyze the eigenvalues from the following equation
\begin{equation*}
    H\left(\int_\Omega v_1\phi, \int_\Omega v_2\phi, \int_\Omega v_3\phi\right)^T=\sigma\left(\int_\Omega v_1\phi, \int_\Omega v_2\phi, \int_\Omega v_3\phi\right)^T.
\end{equation*}
The characteristic equation is given as:
\begin{equation}\label{eq:charsigma}
    \sigma^3+a_2\sigma^2+a_1\sigma^1+a_0=0, 
\end{equation}
where
\begin{equation*}
    \begin{split}
    a_2:=&\frac{\alpha}{\lambda}+\frac{\beta\lambda}{\lambda-\alpha}+\mu\left(D_A+D_\rho+D_u\right), \\
    a_1:=&\left(\lambda-\alpha\right)+\frac{\alpha\beta}{\lambda-\alpha}+\mu\left[-2\left(1-\frac{\alpha}{\lambda}\right)+\frac{ D_\rho\alpha}{\lambda}+\frac{D_A\beta\lambda}{\lambda-\alpha}+D_u\left(\frac{\beta\lambda}{\lambda-\alpha}+\frac{\alpha}{\lambda}\right)\right]\\
    &\quad+\mu^2\left(D_AD_\rho+D_AD_u+D_\rho D_u\right), \\
    a_0:=&D_u\mu\biggl\{-\left[\mu_*\left(-2\left(1-\frac{\alpha}{\lambda}\right)+\frac{ D_\rho\alpha}{\lambda}+\frac{D_A\beta\lambda}{\lambda-\alpha}\right)+\mu_*^2D_AD_\rho\right]\\
    &\quad+\mu\left[-2\left(1-\frac{\alpha}{\lambda}\right)+\frac{ D_\rho\alpha}{\lambda}+\frac{D_A\beta\lambda}{\lambda-\alpha}\right]
    +\mu^2D_AD_\rho \biggr\}.
    \end{split}
\end{equation*}

According to Routh--Hurwitz stability criterion, \eqref{eq:charsigma} has roots with non-positive real parts if $a_2 \geq 0$, $a_0 \geq 0$, and $a_2 a_1 \geq a_0$. Our goal is to determine conditions that ensure none of the eigenvalues of $-\Delta$ violate these inequalities. Each coefficient $a_i$ is treated as a polynomial in $\mu$, and the corresponding equalities must be analyzed carefully. Note that $a_2 \geq 0$ for all non-negative values of $\mu$.

We focus on $a_0=0$.  We already know that, when condition $a_0(\chi,\mu)=0$ holds first at $\mu=\mu_1$, the first Neumann mode bifurcates; when it hits at $\mu_k$, mode $k$ bifurcates, leading to $k$-peak steady state patterns.  At the bifurcation point $(\chi_0, \mathbf{0})$ with $\mu=\mu_*$, one of the roots of \eqref{eq:charsigma} is $\sigma=0$, $a_0$ must be $0$. We get that the other two roots (besides $\mu_*$) are
\begin{equation*}
    \mu_\circ=0, \quad \mu_\dagger=\frac{1}{D_A D_\rho}\left[2\left(1-\frac{\alpha}{\lambda}\right)-\frac{\alpha D_\rho}{\lambda}-\frac{D_A\beta\lambda}{\lambda-\alpha}\right]-\mu_*, 
\end{equation*}
where the second one is obtained via Vieta's formula. We need to assume that there are no other eigenvalues of $-\Delta$ in the interval $[\max\{\mu_\circ, \mu_\dagger\}, \max\{\mu_*, \mu_\dagger\}]$ besides $\mu_\circ$, $\mu_*$, and $\mu_\dagger$.

We need to guarantee that for $\mu_*$, $\mu_\circ$, and $\mu_\dagger$ (in the case it is non-negative and an eigenvalue of $-\Delta$), the other two roots of \eqref{eq:charsigma} have a negative real part. We do so by looking at $a_2a_1-a_0$, which has the following form

\begin{equation}\label{eq:a2a1a0}
    \begin{split}
        &\mu^0\biggl\{\left(\frac{\alpha}{\lambda}+\frac{\beta\lambda}{\lambda-\alpha}\right)\left[\left(\lambda-\alpha\right)+\frac{\alpha\beta}{\lambda-\alpha}\right]\biggr\}\\
        +&\mu^1\biggl\{\left(D_A+D_\rho+D_u\right)\left(\left(\lambda-\alpha\right)+\frac{\alpha\beta}{\lambda-\alpha}\right)\\
        &+\left(\frac{\alpha}{\lambda}+\frac{\beta\lambda}{\lambda-\alpha}\right)\left[-2\left(1-\frac{\alpha}{\lambda}\right)+\frac{D_\rho\alpha}{\lambda}+\frac{D_A\beta\lambda}{\lambda-\alpha}+D_u\left(\frac{\beta\lambda}{\lambda-\alpha}+\frac{\alpha}{\lambda}\right)\right]\\
        &+\mu_*\left[\left(-2\left(1-\frac{\alpha}{\lambda}\right)+\frac{ D_\rho\alpha}{\lambda}+\frac{D_A\beta\lambda}{\lambda-\alpha}\right)+\mu_* D_AD_\rho\right]\biggr\}\\
        +&\mu^2\biggl\{\left(D_A+D_\rho\right)\left[-2\left(1-\frac{\alpha}{\lambda}\right)+\frac{D_\rho\alpha}{\lambda}+\frac{D_A\beta\lambda}{\lambda-\alpha}\right]+\left[\left(D_A+D_\rho+D_u\right)\left(\frac{\beta\lambda}{\lambda-\alpha}+\frac{\alpha}{\lambda}\right)\right]\\
        &+\left(\frac{\alpha}{\lambda}+\frac{\beta\lambda}{\lambda-\alpha}\right)\left(D_AD_\rho+D_AD_u+D_\rho D_u\right)\biggr\}\\ +&\mu^3\biggl\{\left(D_A+D_\rho\right)\left(D_A+D_u\right)\left(D_\rho+D_u\right)\biggr\}.
    \end{split}
\end{equation}
The coefficients in front of $\mu^0$ and $\mu^3$ are positive, hence by Descartes' rule of signs, we expect at least one negative solution. The other two roots can be both positive, negative, or complex conjugates. If they are negative or complex, then for any $\mu\geq0$, the above polynomial will be positive. If the roots are positive, say $\mu_1<\mu_2$, we need to put extra conditions that there are no eigenvalues of $-\Delta$ in $(\mu_1, \mu_2)$. This guarantees that there is no $\mu$ for which an associated $\sigma$ has a positive real part, including $\mu_\circ$, $\mu_*$, and $\mu_\dagger$. This concludes part (b).

\end{proof}

\begin{corol}\label{1stcor}
     Suppose that $\mu$ is the eigenvalue of $-\Delta$ appearing in Theorem \ref{1stthm} with normalized eigenfunction $\phi$. Suppose that the hypotheses of Theorem \ref{1stthm} are satisfied. Assume that $\int_\Omega\phi^3\not=0$ and $\beta\not=-C_2/C_1$, where $C_1$ and $C_2$ are given by \eqref{C1} and \eqref{C2}

Further suppose that 
\begin{equation*}
    -2\left(1-\frac{\alpha}{\lambda}\right)+\frac{ D_\rho\alpha}{\lambda}+\frac{D_A\beta\lambda}{\lambda-\alpha}>0.
\end{equation*}
If $\mu$ is the first positive eigenvalue of $-\Delta$, then the branch $(\chi(s), A(s), \rho(s), u(s))$ of solution of \eqref{eq:EQ} bifurcating from $(\chi_0, \Abar, \rbar, \ubar)$ is stable for sufficiently small and non-zero $|s|$. 
\end{corol}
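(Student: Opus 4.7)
The plan is to reduce Corollary \ref{1stcor} to a direct application of Theorem \ref{2ndthm} by showing that, under the added hypothesis that $\mu$ is the first positive eigenvalue of $-\Delta$ together with the sign condition
\[ M := -2\left(1-\frac{\alpha}{\lambda}\right)+\frac{D_\rho \alpha}{\lambda}+\frac{D_A \beta \lambda}{\lambda-\alpha} > 0, \]
the two structural assumptions of Theorem \ref{2ndthm} (the eigenvalue-exclusion condition involving $\theta$, and the condition on positive roots $r_1,r_2$ of the cubic \eqref{eq:a2a1a0=0}) become either automatic or vacuous, while the remaining hypotheses of Theorem \ref{2ndthm} are taken verbatim from the Corollary.

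The first step is to simplify $\theta$. Comparing the expression for $\theta$ in Theorem \ref{2ndthm} with the formula $\mu_\dagger=\frac{1}{D_AD_\rho}[-M]-\mu$ extracted from the factorization of $a_0$ in the proof of Theorem \ref{2ndthm}, and recalling the identity $\mu_\dagger=\theta-\mu$, a short manipulation yields $\theta=-M/(D_AD_\rho)$. Under $M>0$ this gives $\theta<0$ and hence $\theta-\mu<0$, so the exclusion interval $(\max\{0,\theta-\mu\},\max\{\mu,\theta-\mu\})$ collapses to $(0,\mu)$. Since $\mu$ is assumed to be the first positive eigenvalue of $-\Delta$, no eigenvalue lies in $(0,\mu)$, so the exclusion is satisfied automatically; and the clause concerning $\theta-\mu$ as a possible eigenvalue is vacuous because all eigenvalues of $-\Delta$ are non-negative while $\theta-\mu<0$.

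The second step is to show that the cubic \eqref{eq:a2a1a0=0} has no positive real root under $M>0$. Reading off its coefficients: the $r^3$-coefficient $(D_A+D_\rho)(D_A+D_u)(D_\rho+D_u)$ is positive; the $r^2$-coefficient decomposes as $(D_A+D_\rho)M+(D_A+D_\rho+D_u)(\beta\lambda/(\lambda-\alpha)+\alpha/\lambda)+(\alpha/\lambda+\beta\lambda/(\lambda-\alpha))(D_AD_\rho+D_AD_u+D_\rho D_u)$, positive under $M>0$; the $r^1$-coefficient splits into $(D_A+D_\rho+D_u)((\lambda-\alpha)+\alpha\beta/(\lambda-\alpha))$, $(\alpha/\lambda+\beta\lambda/(\lambda-\alpha))[M+D_u(\beta\lambda/(\lambda-\alpha)+\alpha/\lambda)]$, and $\mu[M+\mu D_AD_\rho]$, each positive under $M>0$; and the constant $(\alpha/\lambda+\beta\lambda/(\lambda-\alpha))[(\lambda-\alpha)+\alpha\beta/(\lambda-\alpha)]$ is positive under the standing inequalities $0<\alpha<\lambda<\alpha+\beta$. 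With all four coefficients strictly positive, Descartes' rule of signs yields zero positive real roots. Consequently the case ``$r_1$ and $r_2$ both positive'' of Theorem \ref{2ndthm} never occurs, and the associated restriction on $\mu$ is vacuous.

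Combining these two reductions with the remaining standing hypotheses of Corollary \ref{1stcor} (the assumptions of Theorem \ref{1stthm} together with $\int_\Omega \phi^3\neq 0$ and $\beta\neq -C_2/C_1$) puts all hypotheses of Theorem \ref{2ndthm} in force, and the stability conclusion for the bifurcating branch follows immediately. The main obstacle in carrying out this plan is the sign bookkeeping in the coefficients of the cubic, especially the $r^1$- and $r^2$-coefficients, which are sums of several compound expressions; the organizing observation is that the single hypothesis $M>0$ is tailored precisely to dominate the only terms whose signs are not forced by the basic parameter constraints, so that once the $M$-terms are isolated each coefficient is visibly a sum of positive pieces.
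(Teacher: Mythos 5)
Your proposal is correct and follows essentially the same route as the paper's own (much terser) proof: under the sign condition the coefficients of the cubic $a_2a_1-a_0$ are all positive so no positive roots $r_1,r_2$ exist, and $\mu_\dagger=\theta-\mu<0$ collapses the exclusion interval to $(0,\mu)$, which is empty of eigenvalues when $\mu$ is the first positive one. Your explicit identification $\theta=-M/(D_AD_\rho)$ and the term-by-term sign bookkeeping are accurate and simply fill in details the paper leaves implicit.
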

\begin{proof}

With the same notation as in the proof of Theorem \ref{2ndthm}, suppose that
\begin{equation*}
    -2\left(1-\frac{\alpha}{\lambda}\right)
    +\frac{D_\rho \alpha}{\lambda}
    +\frac{D_A \beta \lambda}{\lambda-\alpha}
    \geq 0.
\end{equation*}
In this case, the coefficients of $\mu^1$ and $\mu^2$ in \eqref{eq:a2a1a0} are positive, which implies that no positive roots exist. However, this condition also forces $\mu_\dagger$ to be negative. Consequently, stability requires $\mu_*$ to be the first eigenvalue of $-\Delta$, since we must ensure that no eigenvalues lie in the interval $(0,\mu_*)$.
    
\end{proof}

With Corollary \ref{1stcor}, we arrive at a dichotomy. If
\begin{equation}\label{eq:mudag_1}
    -2\left(1-\frac{\alpha}{\lambda}\right)+\frac{ D_\rho\alpha}{\lambda}+\frac{D_A\beta\lambda}{\lambda-\alpha}>0,
\end{equation}
then stability is guaranteed only when $\mu$ corresponds to the first positive eigenvalue. On the other hand, if
\begin{equation}\label{eq:mudag_2}
    -2\left(1-\frac{\alpha}{\lambda}\right)+\frac{ D_\rho\alpha}{\lambda}+\frac{D_A\beta\lambda}{\lambda-\alpha}<-\mu D_AD_\rho<0,
\end{equation}
then $\mu_\dagger>0$, and stability requires that no eigenvalues of $\Delta$ fall within the interval $(\min\{\mu_, \mu_\dagger\},\\ \max\{\mu_, \mu_\dagger\})$ (recall, $\mu=\mu^*$).  This condition is less restrictive on $\mu_*$, since it does not force it to be the first eigenvalue. However, in this case, one must further ensure that no eigenvalues are present between $r_1$ and $r_2$, the positive solutions of \eqref{eq:a2a1a0=0}, if such exist.

These inequalities, and the resulting stability regions, depend sensitively on parameter choices. For instance, if $D_\rho>2\frac{\lambda-\alpha}{\alpha}$, then \eqref{eq:mudag_1} holds, while if $D_\rho<2\frac{\lambda-\alpha}{\alpha}$ and $D_A<\frac{(-D_\rho\alpha+2(\lambda-\alpha))(\lambda-\alpha)}{\lambda(\beta+D_\rho\mu(\lambda-\alpha))}$, then \eqref{eq:mudag_2} applies. Mathematically, this shows that smaller diffusion rates broaden the range of eigenvalues that can support stable non-constant branches, i.e., patterned steady states.

The stability of the uniform steady state depends very sensitively on the diffusion parameters of the model. Mathematically, the key inequalities show that when offender diffusion ($D_\rho$) and attractiveness diffusion ($D_A$) are relatively large, the system tends to remain close to uniform, and only the simplest spatial modes can be stable. By contrast, when both diffusion rates are small, the conditions for stability relax, and more spatial modes are permitted. In this regime, the uniform steady state can lose stability, and the system may bifurcate to non-uniform solutions—steady patterns that represent spatial clustering.

From a criminological perspective, these bifurcation points can be interpreted as thresholds in offender and guardian mobility. When offenders move broadly and randomly across space and when the attractiveness of a crime event spreads widely, local concentrations of crime are smoothed out. However, if offender movement is limited and the attractiveness effect remains strongly localized, the system can cross a threshold where uniform crime distributions are no longer sustainable. Instead, stable heterogeneous patterns emerge, corresponding to persistent hotspots.  Social science research suggests that such mechanisms may underlie the persistence of crime in certain neighborhoods. Limited offender mobility, coupled with strong near-repeat effects, reinforces clustering and creates conditions where particular areas remain chronically affected. In this sense, the mathematics formalizes a well-known observation in criminology: that reduced movement and strong local feedback can lock specific places into cycles of concentrated criminal activity. While empirical confirmation in the exact setting of this model is still lacking, the analysis provides a plausible theoretical explanation for the formation and persistence of crime hotspots. 

\section{Hopf Bifurcations and Time-Periodic Dynamics}

In this section, we investigate Hopf bifurcations, which correspond to the emergence of periodic solutions. Following a similar approach as in the previous section, we demonstrate that, under appropriate conditions, these cycles exist. We further validate the theoretical findings through numerical simulations of the original system. To begin, we introduce a key polynomial, which serves as the characteristic equation of $H$
\begin{equation}\label{eq:charH}
    \sigma^3+a_2(\mu)\sigma^2+a_1(\mu)\sigma^1+a_0(\mu)=0, 
\end{equation}
where
\begin{equation}\label{eq:a_iH}
    \begin{split}
    a_2\left(\mu\right)=&\frac{\alpha}{\lambda}+\frac{\beta\lambda}{\lambda-\alpha}+\mu\left(D_A+D_\rho+D_u\right), \\
    a_1\left(\mu\right)=&\left(\lambda-\alpha\right)+\frac{\alpha\beta}{\lambda-\alpha}+\mu\left[-2\left(1-\frac{\alpha}{\lambda}\right)+\frac{ D_\rho\alpha}{\lambda}+\frac{D_A\beta\lambda}{\lambda-\alpha}+D_u\left(\frac{\beta\lambda}{\lambda-\alpha}+\frac{\alpha}{\lambda}\right)\right]\\
    &\quad+\mu^2\left(D_AD_\rho+D_AD_u+D_\rho D_u\right), \\
    a_0\left(\chi, \mu\right)=&D_u\mu\biggl\{\left[\left(\lambda-\alpha\right)+\frac{\alpha\beta}{\lambda-\alpha}+\chi\left(\lambda-\alpha\right)\left(\frac{\beta}{\lambda-\alpha}-1\right)\right]\\
    &\quad+\mu\left[-2\left(1-\frac{\alpha}{\lambda}\right)+\frac{ D_\rho\alpha}{\lambda}+\frac{D_A\beta\lambda}{\lambda-\alpha}\right]
    +\mu^2D_AD_\rho\biggr\}.
    \end{split}
\end{equation}
We further need the following rational function
\begin{equation}\label{eq:rat}
\chi_*:=\frac{1}{\beta+\alpha-\lambda}\left\{\frac{a_1a_2}{D_u\mu}-\left(\lambda-\alpha\right)-\frac{\alpha\beta}{\lambda-\alpha}-\mu\left[-2\left(1-\frac{\alpha}{\lambda}\right)+\frac{ D_\rho\alpha}{\lambda}+\frac{D_A\beta\lambda}{\lambda-\alpha}\right]
-\mu^2D_AD_\rho\right\}.
\end{equation}
Note that the Theorem is stated for the shift problem \eqref{eq:EP}, but it holds for the original system \eqref{eq:E}.
\begin{theorem}\label{3rdthm}
 Let $\mu$ be a positive eigenvalue of $-\Delta$ with homogeneous Neumann boundary condition such that $a_1(\mu)>0$ in \eqref{eq:a_iH}, and $\mu$ be larger than local maximizer of $\chi_*$ in \eqref{eq:rat}. Then system \eqref{eq:EP} has in a neighborhood of $(\chi_*, \mathbf{0})\in V$ a unique one-parameter family $\gamma(s);0<s<\varepsilon$ of noncritical periodic orbits. More precisely: there exists $\varepsilon>0$ and 
    \begin{align*}
(\chi(\cdot), A(\cdot), \rho(\cdot), u(\cdot), T(\cdot))\in C^\infty((-\varepsilon, \varepsilon), V\times \mathbb{R}^+)
    \end{align*}
    with
    \begin{align*}
    (\chi(0), A(0), \rho(0), u(0), T(0))=\left(\chi_*, \Abar, \rbar, \ubar, \frac{2\pi}{\sqrt{a_1(\mu)}}\right)
    \end{align*}
    such that 
    \begin{align*}
        \gamma(s):=\gamma(A(s), \rho(s), u(s)), \quad 0<s<|\varepsilon|
    \end{align*}
    is a noncritical periodic orbit of \eqref{eq:EP} with $\chi=\chi(s)$ of period $T(s)$ passing through $A(s), \rho(s), u(s)\in Y$.  Moreover, the family $\{\gamma(s); 0<s<\varepsilon\}$ contains every noncritical periodic orbit of \eqref{eq:EP} lying in a suitable neighborhood of $(\chi_*, \mathbf{0}, T(0))\in V\times \mathbb{R}^+$, and $\gamma(s_1)\not=\gamma(s_2)$ for $s_1\neq s_2$
\end{theorem}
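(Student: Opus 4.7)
\textbf{Proof proposal for Theorem \ref{3rdthm}.} The plan is to reduce the Hopf bifurcation analysis of \eqref{eq:EP} to a family of $3{\times}3$ matrix eigenvalue problems indexed by Laplacian eigenmodes, and then apply a classical Hopf theorem in the PDE setting (e.g., the formulation in \cite{crandall1973bifurcation} or the sectorial-operator version in Kielhöfer). First I would expand the linearization $D\mathcal{F}_{(A,\rho,u)}(\chi,\mathbf{0})$ in the orthonormal Neumann basis $\{\phi_j\}$ from \eqref{eq:LN}: as in the proof of Theorem \ref{2ndthm}, the spectrum of the linearization decomposes as the union over $j$ of the roots of the cubic \eqref{eq:charH} with $\mu$ replaced by $\mu_j$, so the full spectral picture is encoded in \eqref{eq:charH}.

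Next, at the candidate bifurcation point $\chi=\chi_*$ defined in \eqref{eq:rat}, I would verify the Hopf eigenvalue structure for the mode $\mu$ under consideration. By construction \eqref{eq:rat} is exactly the relation $a_0(\chi_*,\mu)=a_1(\mu)\,a_2(\mu)$, so \eqref{eq:charH} factors as $(\sigma+a_2(\mu))(\sigma^2+a_1(\mu))$; combined with the hypothesis $a_1(\mu)>0$ and the positivity of $a_2(\mu)$, this gives a simple negative root $\sigma=-a_2(\mu)$ and a simple pair of purely imaginary roots $\sigma=\pm i\sqrt{a_1(\mu)}$, so the putative period is $T(0)=2\pi/\sqrt{a_1(\mu)}$. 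For transversality, I would implicitly differentiate \eqref{eq:charH} in $\chi$ at $\sigma=i\sqrt{a_1(\mu)}$; since only $a_0$ depends on $\chi$ and $\partial_\chi a_0 = D_u\mu(\beta+\alpha-\lambda)$, which is nonzero under the standing condition $\alpha<\lambda<\alpha+\beta$, a direct computation gives
\begin{equation*}
\mathrm{Re}\!\left(\frac{d\sigma}{d\chi}\Big|_{\chi_*}\right)=\frac{a_1(\mu)\,\partial_\chi a_0(\chi_*,\mu)}{2\bigl(a_1(\mu)^2+a_2(\mu)^2 a_1(\mu)\bigr)}\neq 0,
\end{equation*}
so the pair of eigenvalues crosses the imaginary axis with nonzero speed.

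Third, I would handle the non-resonance and residual stability requirements for every other Laplacian mode $\mu_j\neq\mu$ at the parameter value $\chi=\chi_*(\mu)$. This is where the assumption that $\mu$ lies above every local maximizer of $\mu\mapsto\chi_*(\mu)$ intervenes: in that monotone regime, the curve $\mu_j\mapsto\chi_*(\mu_j)$ strictly separates $\mu$ from its neighbors, which, via the Routh--Hurwitz combination $a_1(\mu_j)a_2(\mu_j)-a_0(\chi_*(\mu),\mu_j)$, forces the cubic \eqref{eq:charH} at $\mu_j$ to have all roots with nonzero real part and, in particular, rules out additional roots of the form $\pm ik\sqrt{a_1(\mu)}$ for $k\in\mathbb{Z}$. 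I would also check the Fredholm property and smoothness hypotheses needed for the PDE Hopf theorem, which follow from the same sectorial/Agmon-type arguments used in the steady-state case, together with the fact that $\mathcal{F}$ is smooth on the neighborhood $V$. Combining the simple imaginary pair, the transversal crossing, the non-resonance, and the hyperbolic negative root gives exactly the hypotheses of the Hopf theorem, which then produces the smooth one-parameter family $(\chi(s),A(s),\rho(s),u(s),T(s))$ with the stated initial data, together with local uniqueness of noncritical periodic orbits and the injectivity $\gamma(s_1)\neq\gamma(s_2)$ for $s_1\neq s_2$.

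The main obstacle I anticipate is the non-resonance step. The Laplacian on $\Omega$ has countably many eigenvalues accumulating at infinity, so one must rule out all higher modes simultaneously, both as sources of imaginary eigenvalues $\pm ik\sqrt{a_1(\mu)}$ and as sources of positive-real-part eigenvalues that would destroy the local center-manifold reduction underlying the Hopf theorem. The geometric content of the ``local maximizer of $\chi_*$'' hypothesis is precisely to enforce strict monotonicity of $\chi_*$ past $\mu$, which yields a spectral gap that makes the needed Routh--Hurwitz inequalities $a_0(\chi_*(\mu),\mu_j)>0$ and $a_1(\mu_j)a_2(\mu_j)-a_0(\chi_*(\mu),\mu_j)>0$ hold uniformly for $\mu_j$ away from $\mu$; turning this geometric picture into a quantitative uniform estimate is the delicate part of the argument.
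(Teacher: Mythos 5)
Your proposal follows essentially the same route as the paper: decompose the linearization over the Neumann eigenmodes, reduce to the cubic \eqref{eq:charH}, identify $\chi_*$ from $a_1a_2=a_0$ so that the roots at the critical parameter are $-a_2(\mu)$ and $\pm i\sqrt{a_1(\mu)}$, verify transversality via $\partial_\chi a_0=D_u\mu(\beta+\alpha-\lambda)\neq 0$, use the local-maximizer hypothesis on $\chi_*$ to isolate the mode $\mu$, and invoke the infinite-dimensional Hopf theorem. Your transversality step by implicit differentiation of the cubic is equivalent to the paper's Vieta-based system for $(p,q,\sigma_1)$ and yields the same nonzero crossing speed $\mathrm{Re}(d\sigma/d\chi)=a_0'/\bigl(2(a_1+a_2^2)\bigr)$, so the two arguments coincide in substance.
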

\begin{proof}

We verify that the conditions of Theorem 1 in \cite{amann2006hopf} are satisfied. In addition, we follow the proof strategies developed in \cite{rodriguez2021understanding, wang2017time}. Specifically, we analyze the eigenvalues of \eqref{eq:DFARU0}, the linearization of $\mathcal{F}$ at $\mathbf{0}$. The procedure mirrors our earlier approach: multiply by $\phi$ and integrate over the domain. This again reduces the problem to studying the characteristic equation of $H$, defined in \eqref{eq:charH} and \eqref{eq:a_iH}, and determining the conditions under which purely imaginary roots arise.

Let $\sigma_2$ and $\sigma_3$ be purely imaginary roots of the form $\pm i \omega_0$ with $\omega_0 > 0$, and let $\sigma_1$ denote the remaining real root. Then \eqref{eq:charH} can be rewritten as
\[(\sigma - \sigma_1)(\sigma^2 + \omega_0^2)
= \sigma^3 - \sigma_1 \sigma^2 + \omega_0^2 \sigma - \sigma_1 \omega_0^2.\]
It follows that $\sigma_1 = -a_2(\mu) < 0$ and $\sigma_{2,3} = \pm i \sqrt{a_1(\mu)}$. 
To guarantee that the roots are purely imaginary, we require $a_1(\mu) > 0$, which holds whenever $\mu$ is chosen outside the real roots of $a_1(\mu)$.  The critical bifurcation value $\chi_*(\mu)$ is then defined by the condition $a_2(\mu) a_1(\mu) = a_0(\mu)$.  In general, the rational function of $\mu$ given in \eqref{eq:rat} is continuous but not necessarily injective. 
Let $\mu_M$ denote a real local maximizer of this function.  Then, for any $\chi_*(\mu) > \chi_*(\mu_M)$, the corresponding value of $\mu$ is uniquely determined.  

Finally, we need to check the transversality condition. Let $\sigma_1(\chi)$ and $\sigma_{2, 3}(\chi)=p(\chi)+iq(\chi)$ be solutions to \eqref{eq:charH} in the neighborhood of $\chi=\chi_*$. Then, upon substitution and equating the coefficient in front of the power of $\sigma$, we get
\begin{align*}
    \left\{
    \begin{aligned}
        -a_2=&2p(\chi)+\sigma_1(\chi), \\
        a_1=&p^2(\chi)+q^2(\chi)+2p(\chi)\sigma_1(\chi), \\
        -a_0=&(p^2(\chi)+q^2(\chi))\sigma_1(\chi).
    \end{aligned}
    \right.
\end{align*}

Differentiating with respect to $\chi$ we get that $a_0(\mu)'=D_u\mu(\lambda-\alpha)\left(\frac{\beta}{\lambda-\alpha}-1\right)>0$.
\begin{align*}
    \left\{
    \begin{aligned}
        0=&2p'(\chi)+\sigma_1'(\chi), \\
0=&2p(\chi)p'(\chi)+2q(\chi)q'(\chi)+2p'(\chi)\sigma_1(\chi)+2p(\chi)\sigma_1'(\chi), \\
        a_0'=&2(p(\chi)p'(\chi)+q(\chi)q'(\chi))\sigma_1(\chi)+(p^2(\chi)+q^2(\chi))\sigma_1'(\chi).
    \end{aligned}
    \right.
\end{align*}
We evaluate at $\chi = \chi_*$, where $p(\chi_*) = 0$, $q(\chi_*) = \sqrt{a_1(\mu)}$, and $\sigma_1(\chi_*) = -a_2(\mu)$. 
Thus,
\begin{align*}
    \left\{
    \begin{aligned}
        0 &= 2p'(\chi_*) + \sigma_1'(\chi_*), \\
        0 &= 2\sqrt{a_1}\, q'(\chi_*) - 2p'(\chi_*)a_2, \\
        a_0' &= -2\sqrt{a_1}\, q'(\chi_*)a_2 + a_1\sigma_1'(\chi_*).
    \end{aligned}
    \right.
\end{align*}
Solving this system for $(p'(\chi_*),\, q'(\chi_*),\, \sigma_1'(\chi_*))^T$, we obtain
\begin{equation*}
    p'(\chi_*) = -\frac{2a_0'}{a_1(\mu) + a_2^2(\mu)} < 0,
\end{equation*}
where the inequality follows from the assumption that $\mu$ lies outside the interval where $a_1(\mu) < 0$. 
    
\end{proof}

\begin{remark}
The main difference between our proof and that in \cite{rodriguez2021understanding} is that our assumptions are based on the eigenvalues of $-\Delta$, rather than on the wave mode vector $\vec{k}$, which reduces to a scalar in one dimension. One could also verify the stability of the resulting periodic solutions following the same methodology as in that paper. Since the procedure does not directly depend on the spatial dimension, we expect analogous conclusions when $\chi$ is treated as a function of $\mu$, and therefore the corresponding proof is omitted. The primary purpose of this section is to illustrate how eigenvalues of the negative Laplacian can be employed to obtain alternative, dimension-independent results, using Hopf bifurcation as an illustrative example.
\end{remark}

The bifurcation value $\chi_*$ determines the onset of Hopf instabilities, where steady states lose stability and give rise to oscillatory behavior. In social-science terms, this suggests that when policing intensity crosses these thresholds, crime dynamics may enter into recurring cycles—periods of high and low crime concentrated in particular neighborhoods—rather than settling into a stable equilibrium. Such cyclical hotspot activity has been observed empirically and reflects the limitations of static policing strategies.

\section{Numerical Simulations and Pattern Formation}

In this section, we numerically investigate heterogeneous steady states in two dimensions, a natural setting for studying urban crime since it reflects how hotspots emerge and persist across neighborhoods.  Previous studies \cite{rodriguez2021understanding, yerlanov2025} employed linear stability analysis to identify key thresholds, denoted by $\chi^-$ and $\chi^+$ (see \cite{yerlanov2025} for precise definitions). The constant steady state is linearly stable whenever $\chi^- < \chi < \chi^+$.   Crossing these thresholds corresponds to intensities at which uniform crime–guardian equilibria destabilize, producing either stationary clustering (steady-state bifurcation) or cycling hotspots (Hopf).  From a social perspective, these bifurcation values mark thresholds of police responsiveness: when policing intensity remains between $\chi^-$ and $\chi^+$, uniform safety can be maintained, but crossing either boundary destabilizes the system, potentially leading to persistent or oscillatory crime concentrations.

Here, we focus on cases in which at most one of these inequalities is violated.  For reproducibility and further exploration, the simulation codes are publicly available: parameter exploration can be carried out with \url{https://github.com/MaYeatCo/urban_bifurcation}, while individual one- and two-dimensional simulations can be found in \url{https://github.com/MaYeatCo/urban_suppression}.

Since the number of parameters is large, we fix most of them and focus on how the diffusion rates affect the solutions. Specifically, we consider the case of hotspot policing, where police movement mimics that of crime agents, \textit{i.e.}, $\chi = 2$ and $D_u = D_\rho$. Our simulations are conducted in two dimensions on a square domain with side length $L$.  We perform a parameter sweep over a fixed range of $D_A$ and $D_\rho$ and compute the orderings of $\chi$, $\chi^-$, and $\chi^+$. The results are summarized in Figure \ref{fig:theory}. All possible orderings are represented by regions I–VI, and these regions correspond to different policing thresholds, where small changes in deployment intensity can qualitatively alter crime distributions.  For instance, the border between regions I and III occurs precisely when $a_2(\mu)a_1(\mu) = a_0(\mu)$ in Theorem \ref{3rdthm}.  Our objective is to determine whether periodic or near-periodic solutions emerge, thereby confirming the theoretical predictions.  After identifying the stability regions, we carry out numerical simulations. We select three parameter combinations within the interior of the regions (labeled A–C), as well as three parameter combinations located around the borders between region I and the others (labeled D–F). Here, we did not conduct simulations for regions IV–VI in Figure \ref{fig:theory}. These regions lie farther from Region I and correspond to cases where multiple Routh–Hurwitz conditions are violated, making the system’s behavior less predictable. As shown in \cite{rodriguez2021understanding}, solutions in these regimes can exhibit chaotic dynamics. Examples of such less predictable behavior (in the theoretical sense) will be presented in the following subsections.

\begin{figure}[h!]
    \centering
    \includegraphics[width=0.9\linewidth]{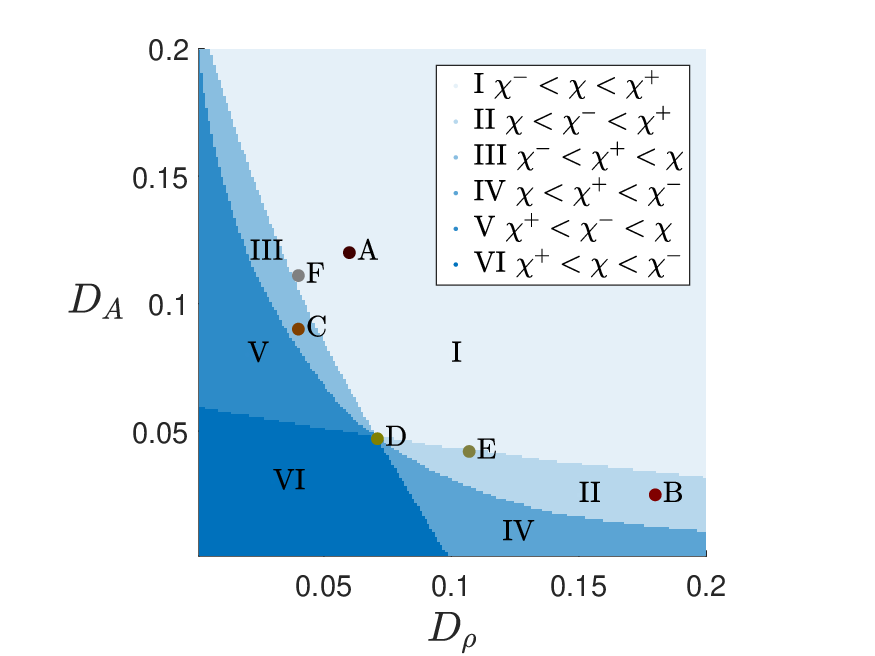}
    \caption{
Illustration of the effect of diffusion rates on the ordering of $\chi$, $\chi^-$, and $\chi^+$ obtained from linear stability analysis. Roman numerals indicate the corresponding inequality regimes, while letters denote parameter combinations selected for further simulations. Region I corresponds to the linearly stable regime. The remaining parameters are fixed at $\alpha = 1.0, \beta = 1.0, \lambda = 1.5, \chi = 2, L = \pi$. Each region represents a distinct ordering of $\chi, \chi^-$, and $\chi^+$.
   Note that this figure illustrates only the predictions from linear stability analysis. Nonlinear effects may significantly influence the solution behavior, particularly near the boundaries between regions. }
    \label{fig:theory}
\end{figure}

 We use the PDE toolbox in MATLAB (\cite{MATLAB, PdeToolbox}) to perform numerical simulations.  As simulations are done in two dimensions, it is not straightforward to depict the evolution of the solution. Thus, we choose to track changes in overall amplitude via a root mean square function. All solutions are initiated at the steady-state constant plus a small perturbation ($0.01$) using a cosine function. The time evolutions are shown in Figure \ref{fig:RMS}. We use the root mean square metric, defined as $\text{RMS}[f(x, t)]=\sqrt{|\Omega|^{-1}\int_\Omega f(\mathbf{x}, t)^2d\mathbf{x}}$. In addition, we plot the final frames of the simulation to show the patterns they exhibit, see Figure \ref{fig:tails}.
 
We now turn to Figures \ref{fig:RMS} and \ref{fig:tails} for a more detailed examination of the dynamics. At point A, corresponding to region I, perturbations decay monotonically, and the solution converges to the spatially uniform steady state, in full agreement with the linear stability predictions. At points B, D, and E, the solution amplitude increases and then saturates at a nonzero level, yielding spatially heterogeneous but temporally stationary states. These patterns may appear as hotspots, symmetric arrangements, or related structures.  At point C, located in region III, the long-time dynamics approach a time-periodic orbit. Near regime boundaries, as exemplified by point F, the system exhibits more intricate dynamics, with oscillations of varying amplitude that indicate the onset of mixed-type or higher-order bifurcations. Such transitional regimes call for more refined analysis, either through extended simulations or by applying weakly nonlinear techniques.  In summary, the numerical results confirm the existence of periodic solutions, consistent with the statement of Theorem \ref{3rdthm}.

\begin{figure}[H]
    \centering
    \begin{subfigure}{0.43\textwidth}
     \includegraphics[width=\textwidth]{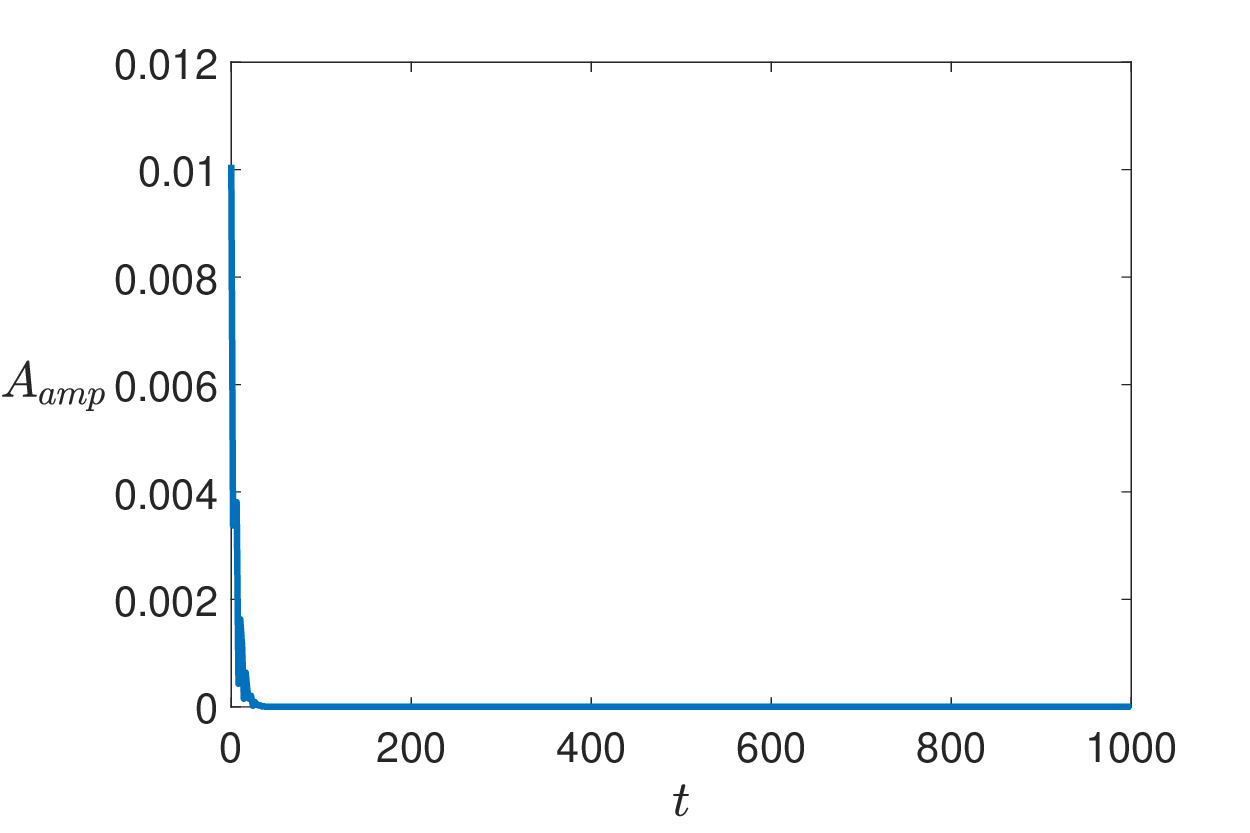}
     \caption{Combination A}\label{sfig:A_amp}
    \end{subfigure}
    ~
    \begin{subfigure}{0.43\textwidth}
    \includegraphics[width=\textwidth]{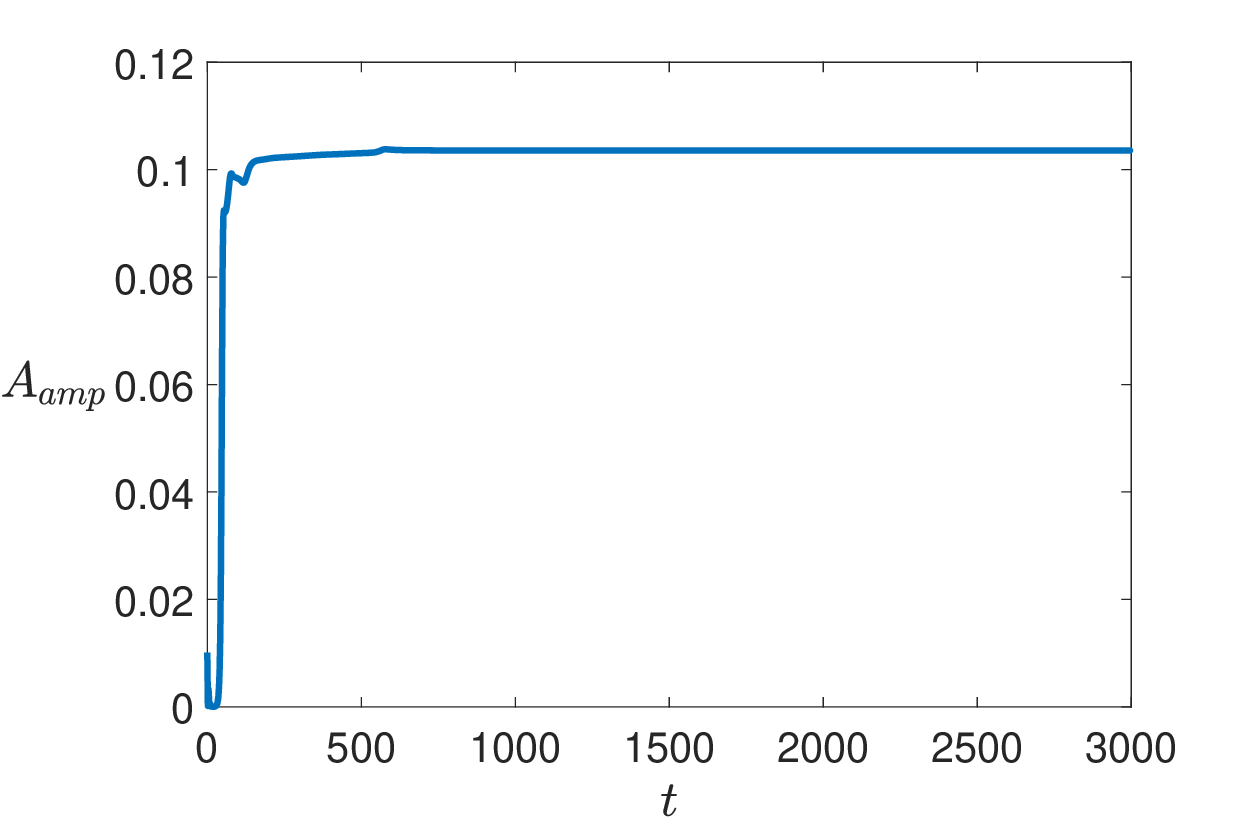}
    \caption{Combination B}\label{sfig:B_amp}
    \end{subfigure}
    \vskip\baselineskip
    \begin{subfigure}{0.43\textwidth}
     \includegraphics[width=\textwidth]{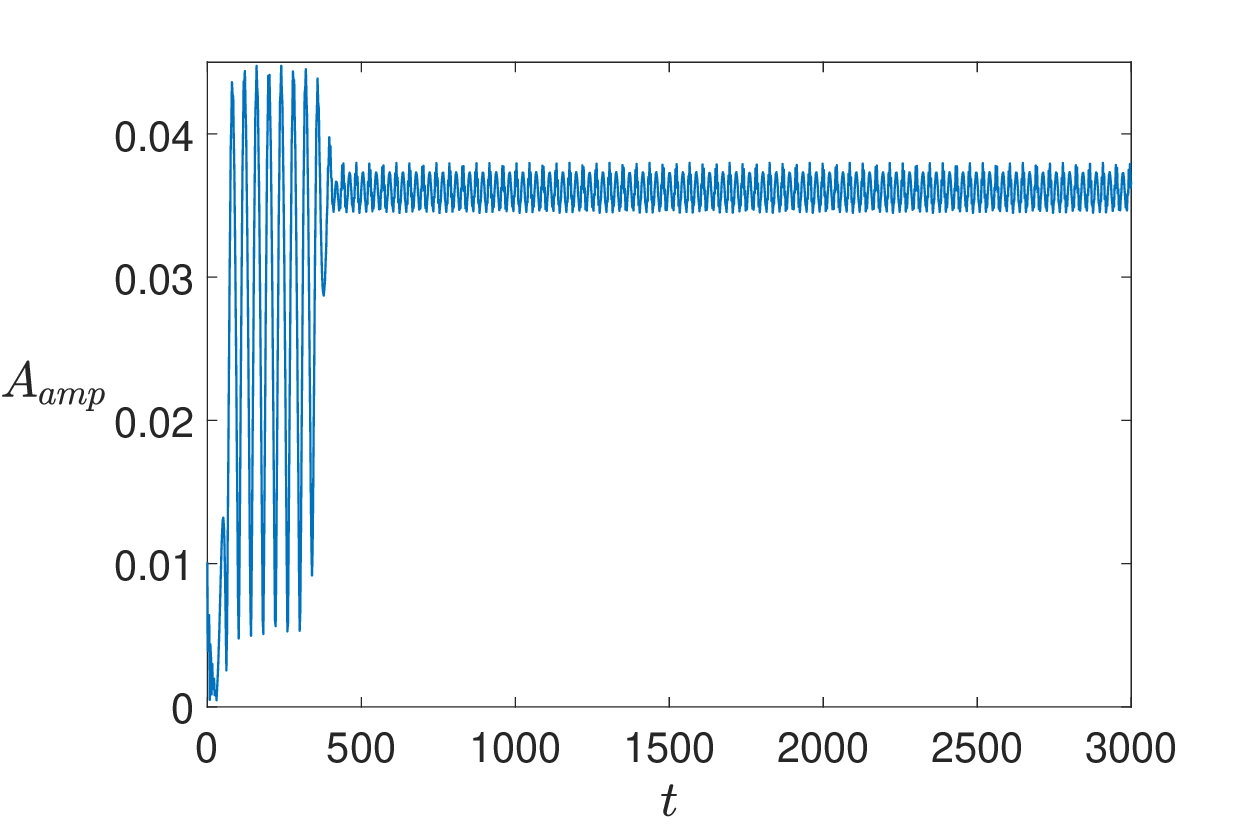}
     \caption{Combination C}\label{sfig:C_amp}
    \end{subfigure}
    ~
    \begin{subfigure}{0.43\textwidth}
    \includegraphics[width=\textwidth]{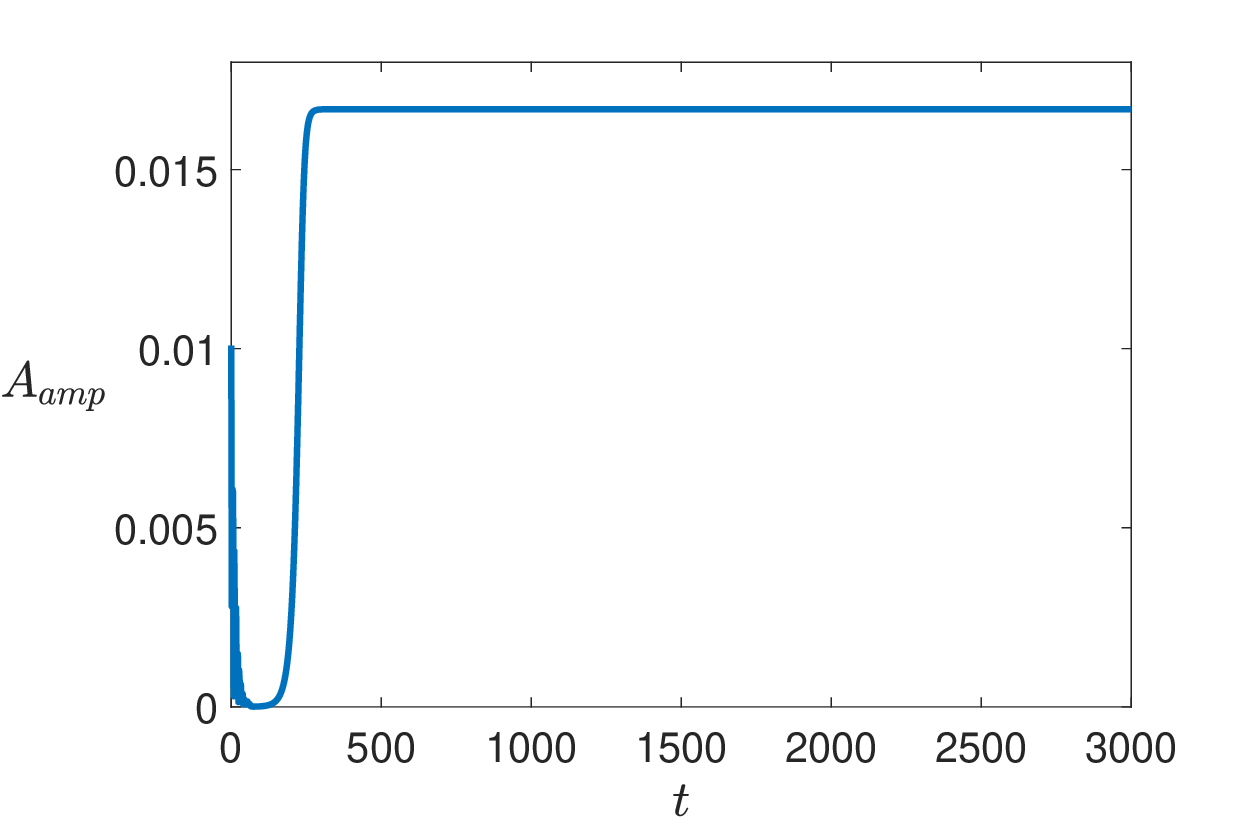}
    \caption{Combination D}\label{sfig:D_amp}
    \end{subfigure}
    \vskip\baselineskip
    \begin{subfigure}{0.43\textwidth}
     \includegraphics[width=\textwidth]{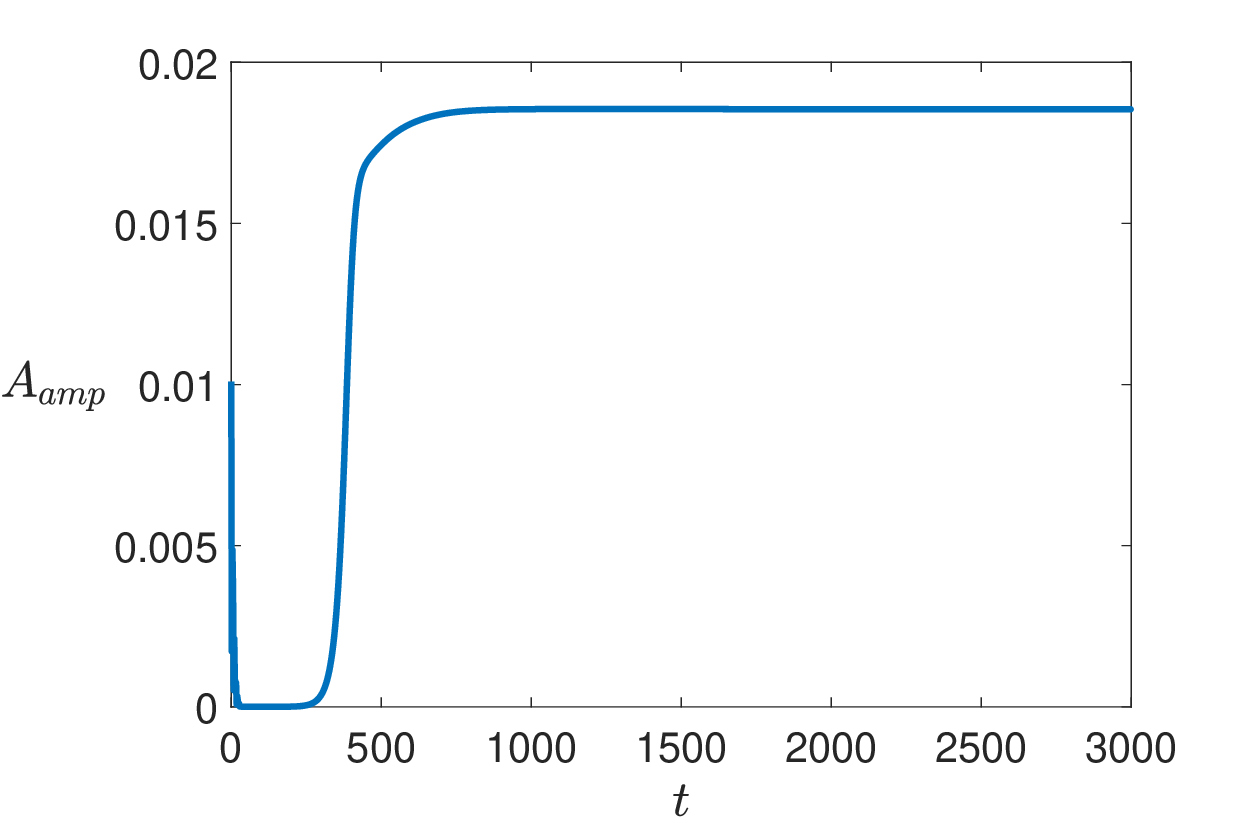}
     \caption{Combination E}\label{sfig:E_amp}
    \end{subfigure}
    ~
    \begin{subfigure}{0.43\textwidth}
    \includegraphics[width=\textwidth]{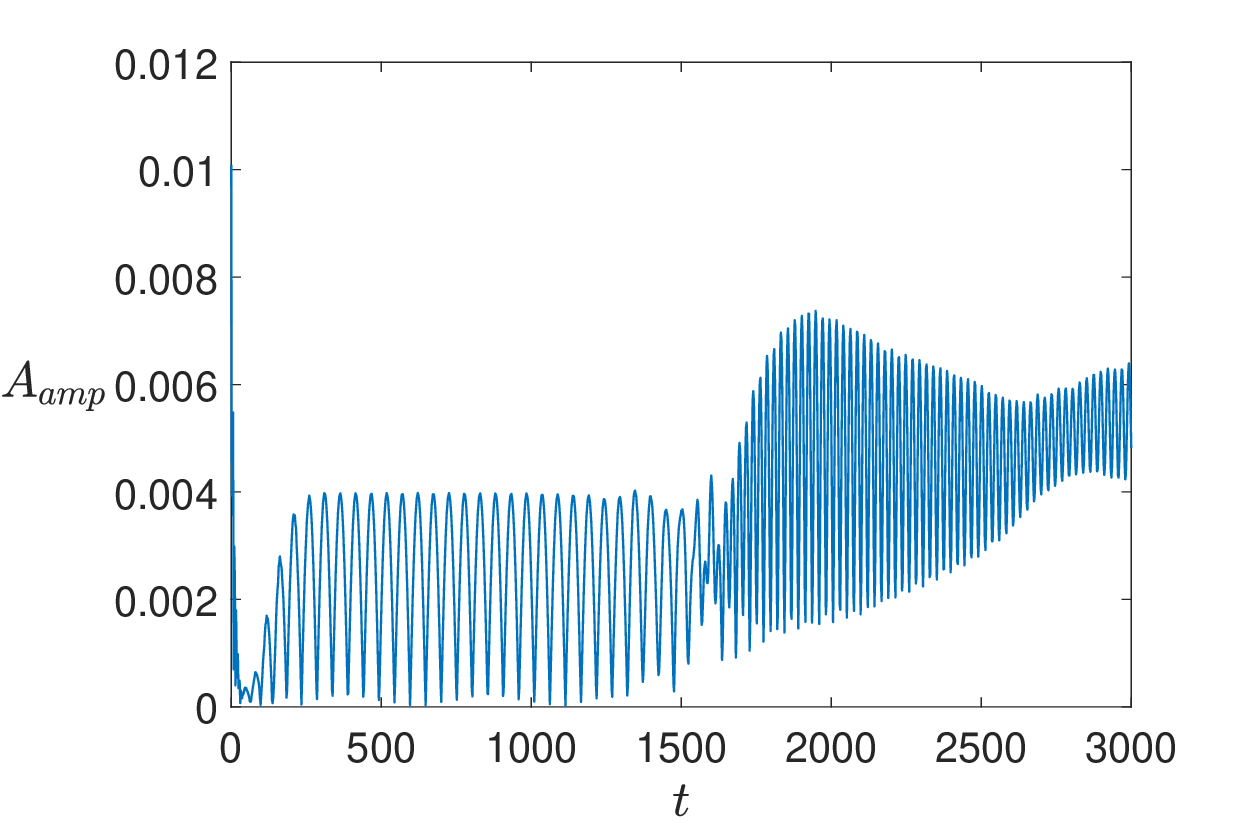}
    \caption{Combination F}\label{sfig:F_amp}
    \end{subfigure}
    \caption{
Examples of amplitude evolutions from Figure~\ref{fig:theory}, computed using $A_{amp}=\mathrm{RMS}[A(\mathbf{x},t)-\lambda]$, illustrate how diffusion rates shape system dynamics. 
In \textbf{(a)}, amplitudes decay monotonically, confirming the stability of the uniform steady state. In \textbf{(b)}, they saturate at a nonzero level, producing stationary hotspots. In \textbf{(c)}, periodic oscillations arise through a Hopf bifurcation, resembling recurrent “crime waves.” In \textbf{(d)}, irregular oscillations signal mixed or higher-order bifurcations and parameter sensitivity. Near regime borders, as in \textbf{(e)} and \textbf{(f)}, small perturbations can trigger clustering or quasi-periodic oscillations with irregular amplitudes, marking transitions to more complex or weakly chaotic behavior. Together, these results show how the system shifts from uniform stability to persistent hotspots, oscillatory patterns, and chaotic regimes as diffusion rates vary.
    }
    \label{fig:RMS}
\end{figure}

\begin{figure}[H]
    \centering
    \begin{subfigure}{0.4\textwidth}
     \includegraphics[width=\textwidth]{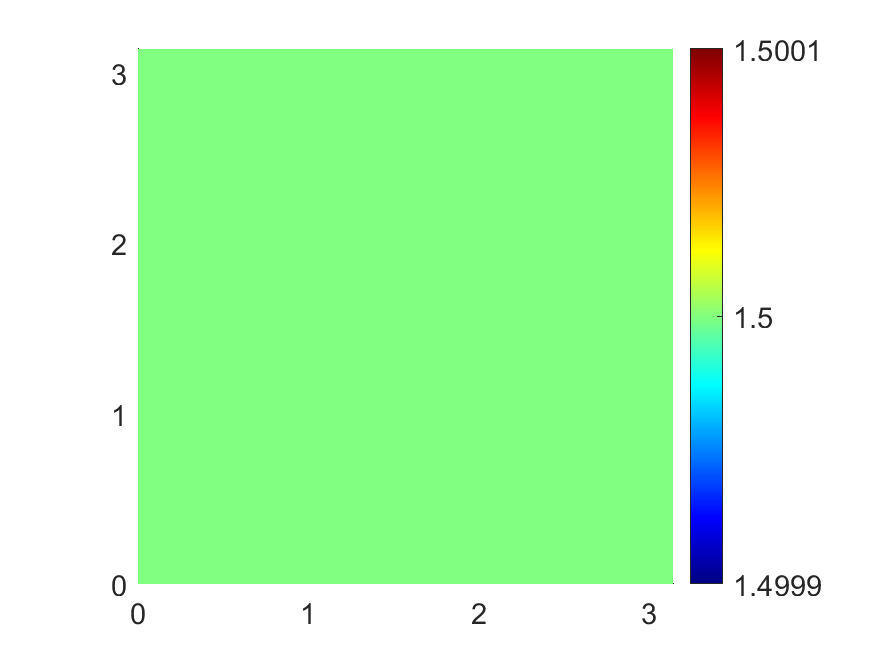}
     \caption{Combination A}\label{sfig:A_final}
    \end{subfigure}
    ~
    \begin{subfigure}{0.4\textwidth}
    \includegraphics[width=\textwidth]{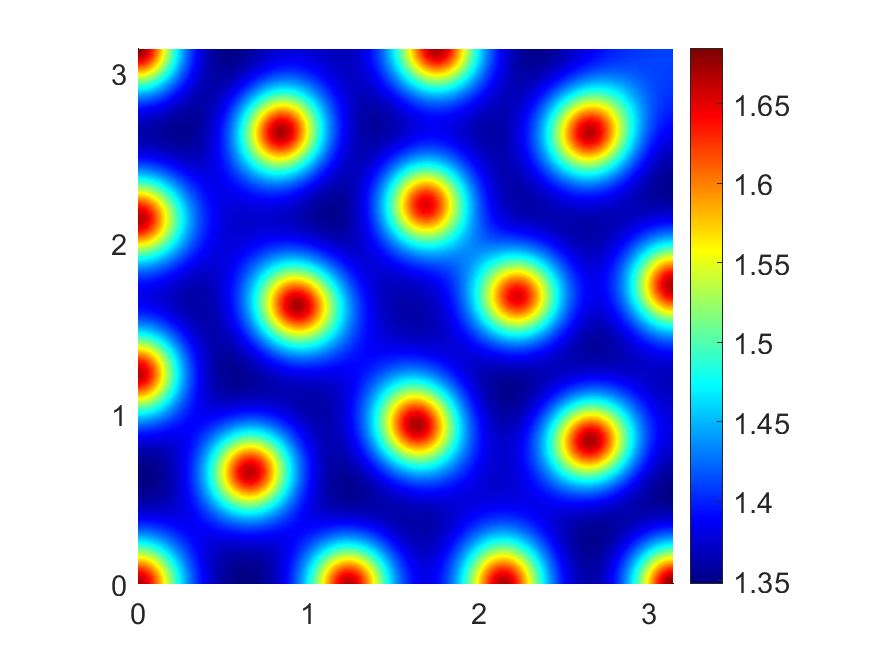}
    \caption{Combination B}\label{sfig:B_final}
    \end{subfigure}
    \vskip\baselineskip
    \begin{subfigure}{0.4\textwidth}
     \includegraphics[width=\textwidth]{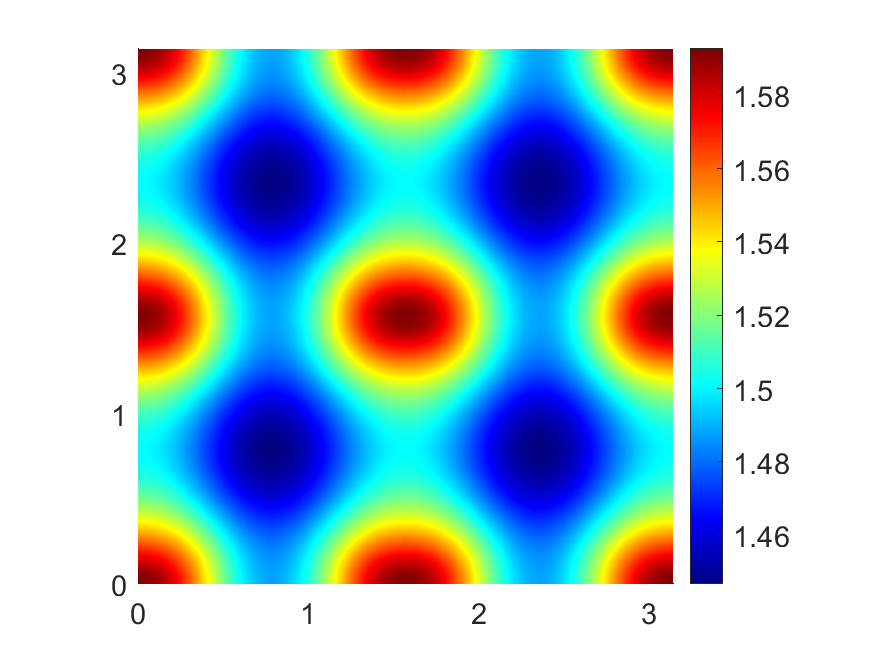}
     \caption{Combination C}
     \label{sfig:C_final}
    \end{subfigure}
    ~
    \begin{subfigure}{0.4\textwidth}
    \includegraphics[width=\textwidth]{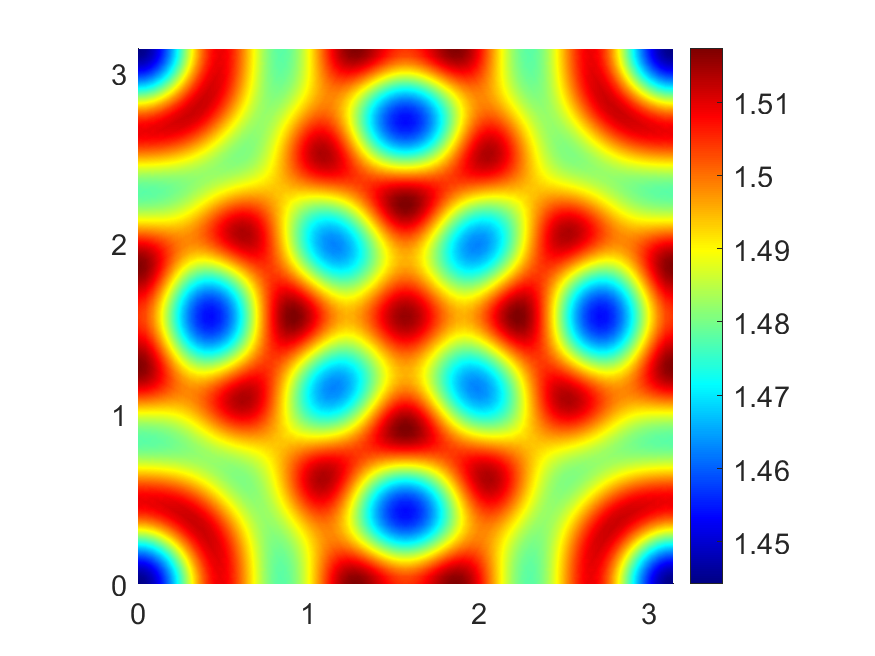}
    \caption{Combination D}\label{sfig:D_final}
    \end{subfigure}
\vskip\baselineskip
    \begin{subfigure}{0.4\textwidth}
     \includegraphics[width=\textwidth]{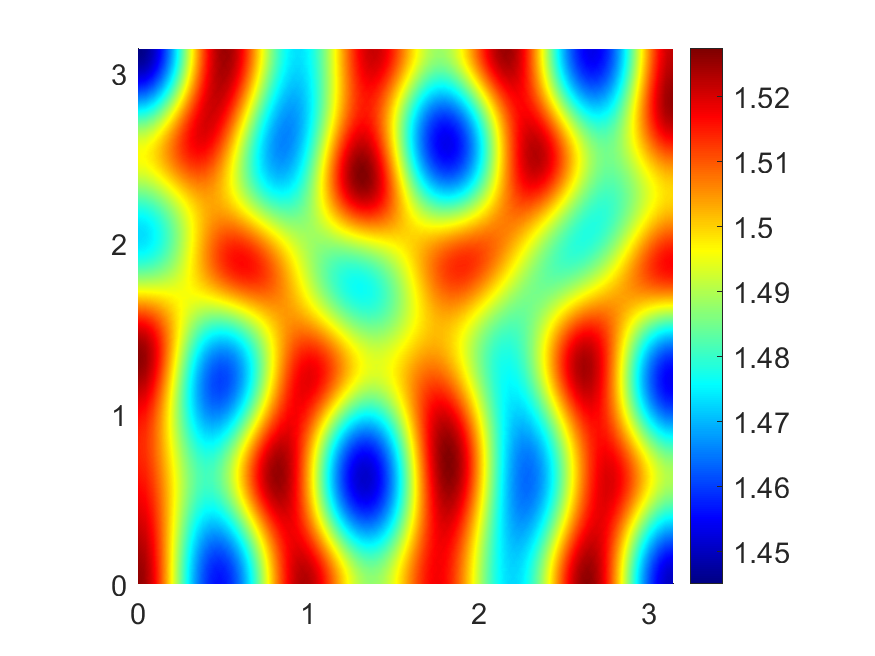}
     \caption{Combination E}\label{sfig:E_final}
    \end{subfigure}
    ~
    \begin{subfigure}{0.4\textwidth}
    \includegraphics[width=\textwidth]{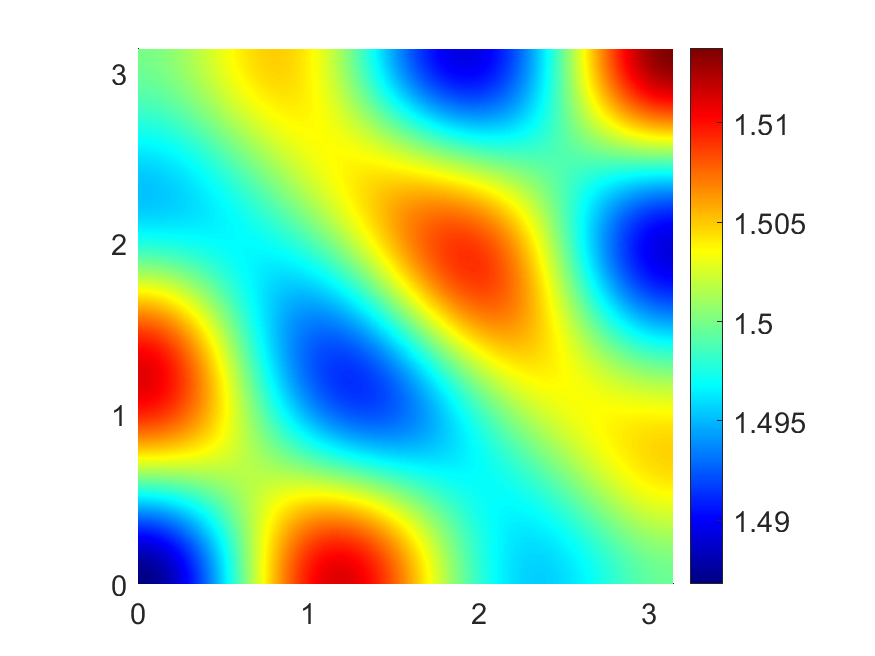}
    \caption{Combination F}\label{sfig:F_final}
    \end{subfigure}
    \caption{
Examples of patterns corresponding to the points in Figure~\ref{fig:theory}. The diffusion rates are: \textbf{(a)} Region I, $D_A=0.12, D_\rho=0.06$; \textbf{(b)} Region II, $D_A=0.025, D_\rho=0.18$; \textbf{(c)} Region III, $D_A=0.09, D_\rho=0.04$; \textbf{(d)} intersection of all regions, $D_A=0.047, D_\rho=0.071$; \textbf{(e)} near the border between Regions I and II, $D_A=0.042, D_\rho=0.107$; \textbf{(f)} near the border between Regions I and III, $D_A=0.111, D_\rho=0.04$. The remaining parameters are fixed at $\alpha=0.5, \beta=1.0, \lambda=0.75, \chi=2, L=\pi$.  The snapshots reveal distinct regimes: uniform states \textbf{(a)}, stationary hotspots \textbf{(b)}, periodic oscillations \textbf{(c)}, irregular mixed patterns at parameter intersections \textbf{(d)}, and near-threshold behavior with localized clustering or quasi-periodic fluctuations \textbf{(e)}, \textbf{(f)}. Together, they illustrate how tuning diffusion rates drives transitions from uniform stability to heterogeneous clustering, oscillations, and complex dynamics.    
}
    \label{fig:tails}
\end{figure}

\subsection{Time-periodic patterns}

In two-dimensional simulations, the solution may require a considerable time to converge, as can be observed in Figure \ref{sfig:F_amp}. To investigate periodic behavior more efficiently, we therefore turn to a one-dimensional setting, where simulations are computationally less demanding. This reduction not only decreases the run-time but also allows the use of a smaller time step $\Delta t$, improving the numerical accuracy. While simplified, the one-dimensional results remain informative, as they capture the essential dynamics and can be extended qualitatively to the two-dimensional case, particularly concerning the existence of distinct classes of solution behavior.

To demonstrate the comparison between the two domains, we simulate system \eqref{eq:E} over the interval under the same combination as in Figure \ref{sfig:C_amp}. The results are illustrated in Figure \ref{fig:1dverofC}. Figure \ref{fig:1dampofC} is the one-dimensional counterpart of Figure \ref{sfig:C_amp}. Although there are differences, the overall qualitative behavior is the same between the two.  Moreover, the simulations over an interval take a shorter time to stabilize (200 in 1-D versus 500 in 2-D), and we see a clearer periodic solution, {\it i.e.} we are more confident in characterizing this behavior as periodic. This trend is further supported in Figure \ref{fig:1dphaseofC}, where we observe a limit cycle. The period seems to be $\approx 240$ time units. Further time-series analysis can be performed, which can be helpful when examining multiple simulations, such as a parameter sweep setting in Figure \ref{fig:theory}, but this is beyond the scope of this paper. Instead, in the following section, we utilize the speed and reliability of one-dimensional simulations to shed light on another type of solution: chaos. 

\begin{figure}[H]
    \centering
    \begin{subfigure}{0.54\textwidth}
    \includegraphics[width=\textwidth]{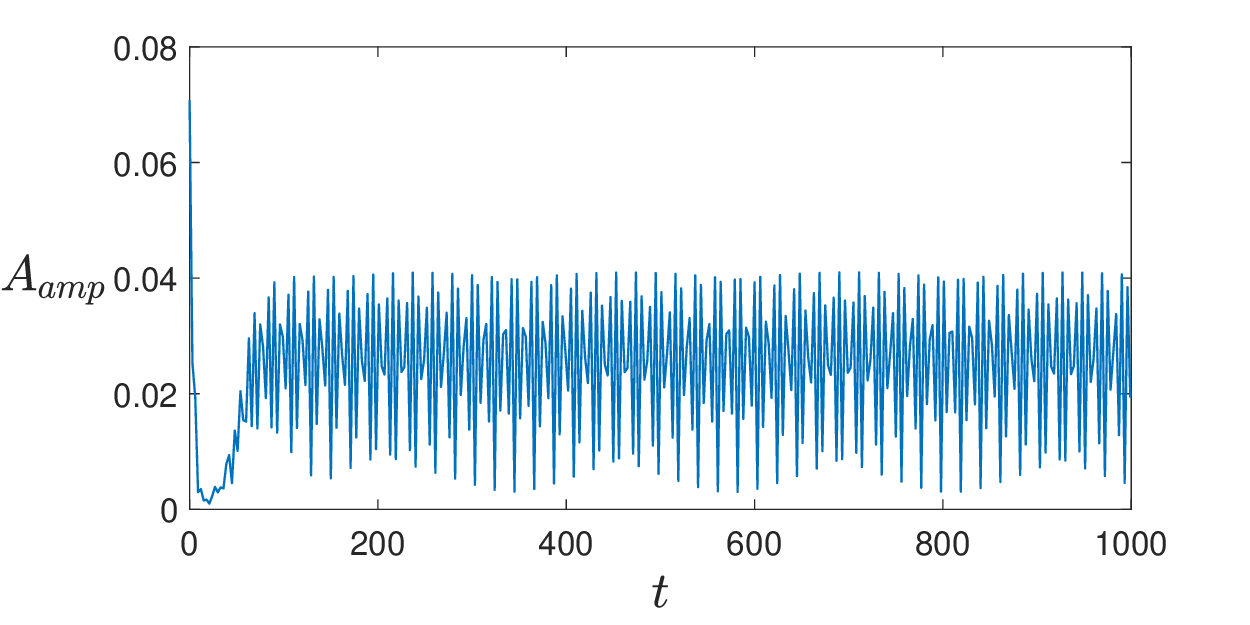}
    \caption{Amplitude evolution of $A$}\label{fig:1dampofC}
    \end{subfigure}
    ~
    \begin{subfigure}{0.36\textwidth}
        \includegraphics[width=\textwidth]{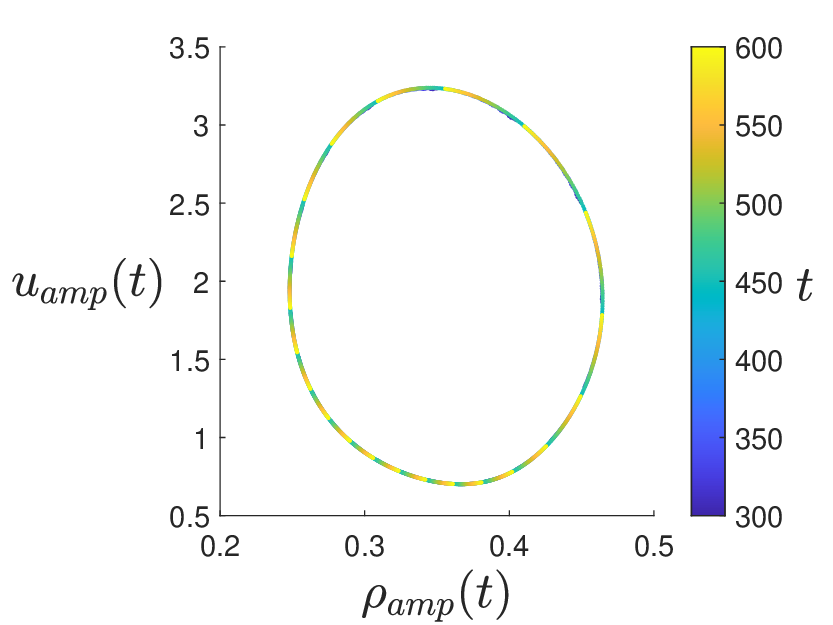}
        \caption{Phase-portrait of amplitudes $\rho$-$u$}\label{fig:1dphaseofC}
    \end{subfigure}
    \caption{
Representation of a one-dimensional simulation of \eqref{eq:E} with the same parameters as in \ref{sfig:C_amp}. 
\textbf{(a)} Emergence and stabilization of periodic behavior in the amplitude of the $A$ component, $A_{amp}=\mathrm{RMS}[A(x,t)-\lambda]$. Data are sampled every 3 time units to provide a clearer visualization of the oscillations. The figure shows that, after an initial transient, the solution does not converge to a steady state but instead settles into sustained periodic dynamics, reflecting the onset of a Hopf bifurcation. 
\textbf{(b)} Limit cycle in the $\rho_{amp}$ versus $u_{amp}$=($\mathrm{RMS}[\rho(x,t)-\rbar]$ versus $\mathrm{RMS}[u(x,t)-\ubar]$) phase portrait for $t \in [300,600]$, with trajectories represented in color. The closed orbit indicates that the dynamics of the offender density ($\rho$) and the guardian density ($u$) are locked into a recurring cycle rather than stabilizing to an equilibrium.  These results highlight the transition from stationary to oscillatory crime–policing patterns. In dynamical systems terms, the system crosses a bifurcation threshold where uniform hotspots lose stability and periodic oscillations emerge. From a criminological perspective, this suggests that crime and enforcement densities may fluctuate in sustained cycles—so hotspots not only persist but also oscillate in intensity and location over time, echoing empirical observations of recurring ``crime waves.”}
    \label{fig:1dverofC}
\end{figure}

\subsection{Exploring chaos}

A natural question is what occurs in regions IV–VI of Figure \ref{fig:theory}. As shown in \cite{rodriguez2021understanding}, under certain parameter regimes—for instance, when diffusion rates are sufficiently small—the system can exhibit chaotic behavior. To provide further evidence of this phenomenon, we present the bifurcation diagram in Figure \ref{fig:bifurd}.  The diagram reveals that as diffusion decreases, new dynamical behaviors emerge. The first periodic solution appears around $D_\rho = 0.06$. At $D_\rho = 0.03$, successive period-doublings occur, eventually leading to the onset of chaos. We explore these regimes through representative solutions in Figures \ref{fig:chaos_ex} and \ref{fig:period_ex}. In Figure \ref{sfig:chaosA}, the time series of $A_m(t)$ displays unpredictable fluctuations, and the system quickly settles into this chaotic regime. By contrast, Figure \ref{sfig:periodA} shows a clear periodic orbit, though it takes longer for the dynamics to stabilize, and the oscillation frequency is relatively high ($\gg 1$).

The corresponding phase portraits for $\rho$ and $u$ further illustrate the difference: in the chaotic case (Figure \ref{sfig:chaosrhoU}), trajectories approach a strange attractor, whereas in the periodic case (Figure \ref{sfig:periodrhoU}), a well-defined closed orbit is observed. Chaotic solutions also exhibit larger deviations from the steady state. For example, although $\ubar = 1$ in both cases, in the periodic regime $u$ oscillates between approximately $0.65$ and $1.2$, while in the chaotic regime it can range from near zero to as high as nine.  Many additional phenomena could be uncovered with a more detailed analysis. However, it is worth noting that the theory of chaos in infinite-dimensional PDE systems is less developed than in finite dimensions, making rigorous characterization challenging. Nevertheless, the present model appears to provide a promising and mathematically rich setting for future research on spatiotemporal chaos in criminology-inspired systems.

\begin{figure}[H]
    \centering
    \includegraphics[width=0.8\textwidth]{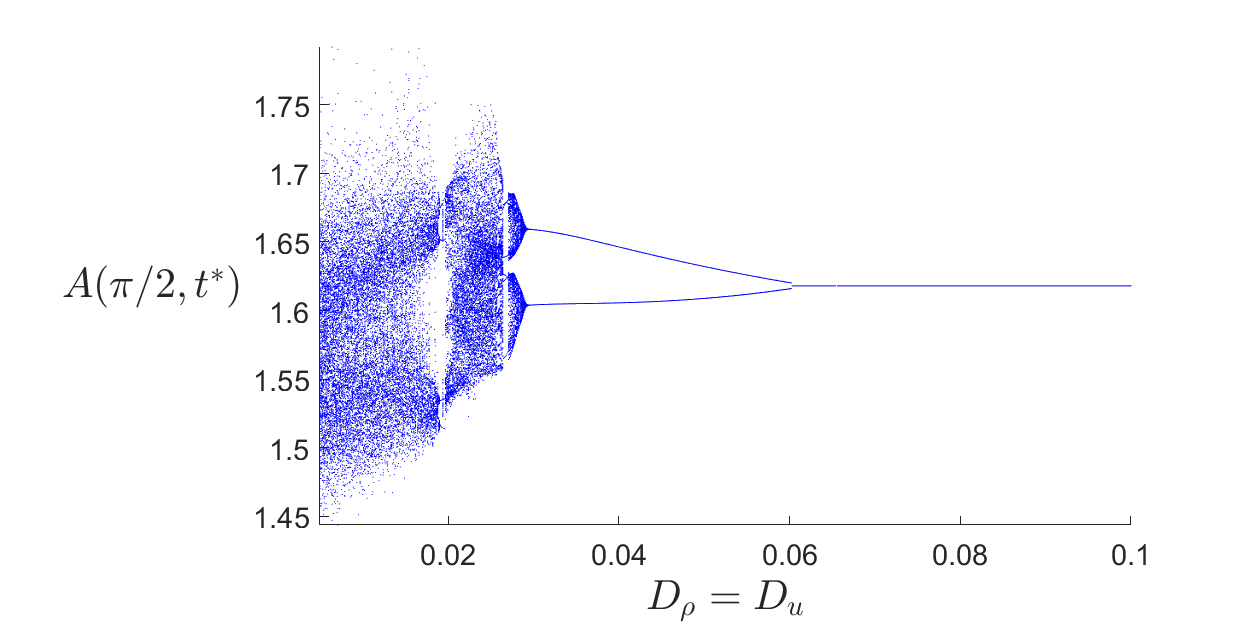}
    \caption{
Bifurcation diagram of the midpoint value $A(\pi/2,t^*)$ in the one-dimensional simulation of \eqref{eq:E} in $(0,\pi)$ with initial perturbation of the form $0.05\cos(x)$. The $y$-axis records all extreme values of $A(\pi/2,t)$ with $t \in [800,1000]$, after transients have decayed. $t^*$ denotes a maximizer or a minimizer. Simulations are run up to $T_{\max}=1000$, treating $t \in [0,800]$ as transient. Parameters are fixed at $D_A=0.1$, $\alpha=\beta=1$, $\lambda=(1+\sqrt{5})/2$, and $\chi=2$, with varying $D_\rho$ to reveal the bifurcation structure.  The diagram illustrates how decreasing $D_\rho$ drives qualitative changes in the long-term behavior of the system. For relatively large diffusion rates, solutions converge to spatially uniform steady states, consistent with linear stability predictions. As $D_\rho$ decreases, periodic solutions emerge, followed by successive period-doublings, and eventually chaotic dynamics. This transition reflects the loss of stability of homogeneous states and the onset of complex spatiotemporal behavior. In particular, the appearance of chaos indicates that small diffusion rates can amplify nonlinear feedback, leading to unpredictable but persistent fluctuations in the crime–policing system. 
    } 
    \label{fig:bifurd}
\end{figure}

\begin{figure}[H]
    \centering
    \begin{subfigure}{0.54\textwidth}
     \includegraphics[width=\textwidth]{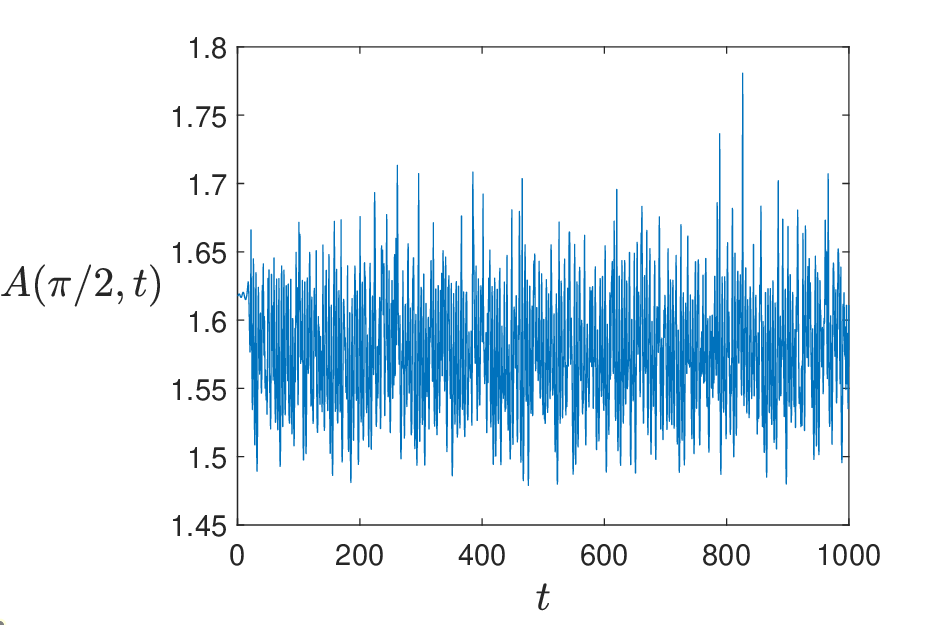}
     \caption{Time evolution of the midpoint of $A$}\label{sfig:chaosA}
    \end{subfigure}
    ~
    \begin{subfigure}{0.36\textwidth}
    \includegraphics[width=\textwidth]{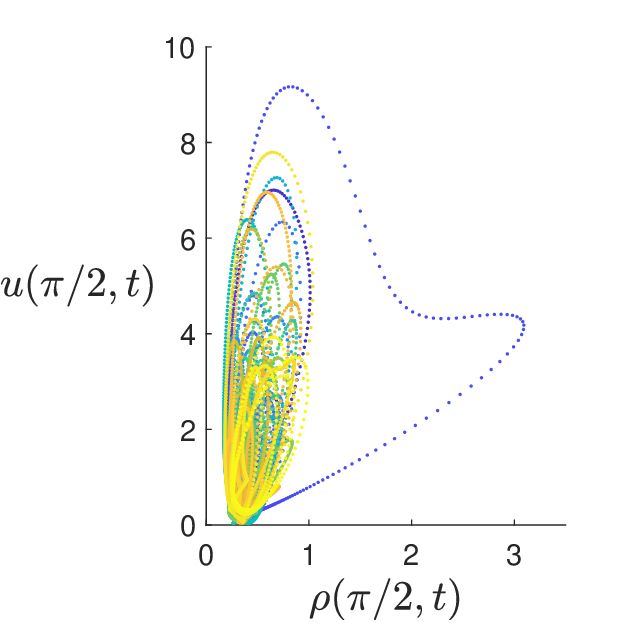}
    \caption{Phase-portrait of midpoints of $\rho-u$}\label{sfig:chaosrhoU}
    \end{subfigure}
    \caption{Representations of the chaotic solution to \eqref{eq:E} with the same parameters as in Figure \ref{fig:bifurd}, except that $D_\rho = D_u = 0.01$. As before, we investigate dynamics at the spatial midpoint, $x = \pi/2$.  \textbf{(a)} Time evolution of $A(\pi/2,t)$.
 Instead of converging to a steady state or periodic orbit, the trajectory exhibits irregular oscillations with no discernible repeating pattern, characteristic of chaos.  \textbf{(b)} Phase portrait of the non-transient trajectories of $\rho(\pi/2,t)$ and $u(\pi/2,t)$ for $t \in [800,1000]$, with colors indicating temporal progression (dark blue at $t = 800$, yellow at $t = 1000$).  The trajectory does not close into a limit cycle but instead wanders in a complex set consistent with a strange attractor. Together, these figures again illustrate that very small diffusion rates destabilize uniform or periodic states and produce chaotic dynamics. From a criminological perspective, this corresponds to volatile crime–policing interactions, where hotspots neither stabilize nor repeat regularly but instead fluctuate unpredictably over time.}
    \label{fig:chaos_ex}
\end{figure}

\begin{figure}[H]
    \centering
    \begin{subfigure}{0.54\textwidth}
\includegraphics[width=\textwidth]{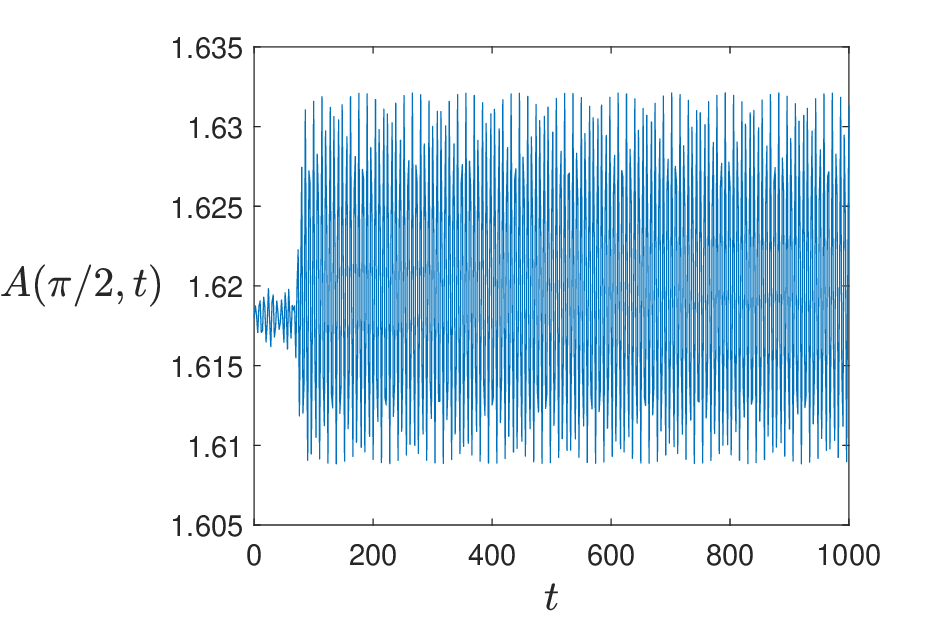}
     \caption{Time evolution of the midpoint of $A_m$}\label{sfig:periodA}
    \end{subfigure}
    ~
    \begin{subfigure}{0.36\textwidth}
\includegraphics[width=\textwidth]{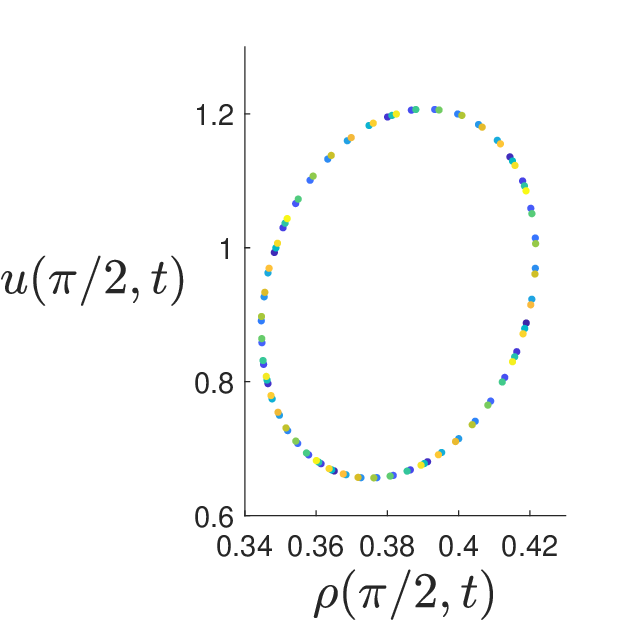}
    \caption{Phase-portrait of midpoints $\rho-u$}\label{sfig:periodrhoU}
    \end{subfigure}
    \caption{Representations of a periodic solution to \eqref{eq:E}. The parameters are the same as in Figure \ref{fig:bifurd}, but additionally $D_\rho=D_u=0.05$. \textbf{(a)} The full time of evolution of $A(\pi/2,t)$ is given. \textbf{(b)} The non-transient $\rho(\pi/2,t)$ and $u(\pi/2,t)$ are plotted against each other. The colors represent the time: dark blue -- $t=800$ and yellow -- $t=1000$.  When compared to Figure \ref{fig:chaos_ex}, we only slightly increased diffusion rates, yet now the trajectory is an orbit after passing the aperiodic transient state. Similarly to Figure \ref{fig:1dverofC}, we observe "crime waves". Moreover, the overall amplitudes are smaller. This figure, along with \ref{fig:chaos_ex}, demonstrates how changes in the agents' movement affect the solution.}
    \label{fig:period_ex}
\end{figure}

\subsection{Further exploration of parameter space}

In this section, we complement our theoretical findings by examining how parameter choices affect the behavior of solutions. Specifically, we focus on the movement dynamics of real agents—namely, how variations in the diffusion and advection rates of $\rho$ and $u$ influence the system. All other parameters are held fixed. In this way, our analysis can be interpreted as studying the outcomes of strategic decisions from one side: given the random movement of criminal agents, what spatial patterns or effects might arise under different police deployment strategies? To address this, we use an interval domain for phase-plane simulations, which enables us to employ sufficiently small time steps to capture the dynamics at lower diffusion values. Recall that large diffusion rates lead to a linearly stable regime, a case that has already been analyzed.

Similar to Figure \ref{fig:theory}, we examine the influence of diffusion rates, but with $D_A$ held fixed. In particular, we explore combinations of relatively small values of $D_\rho$ and $D_u$, with the results shown in Figure \ref{fig:dudw}. As expected, smaller diffusion rates lead to chaotic behavior. As diffusion rates increase, solutions tend to become spatially heterogeneous or exhibit periodic patterns. Only when both diffusion rates are sufficiently large ($D_\rho > 0.1$ and $D_u > 0.05$) does the system converge to constant solutions.  This relationship can be interpreted as crime-stabilizing strategies: police agents may increase their diffusion to drive the system toward a constant-in-time state. Moreover, we observe that temporally varying solutions typically attain significantly higher and lower amplitudes compared to the constant steady-state values, with smaller diffusion rates producing larger deviations. This observation suggests a potential crime-suppression strategy: wait until the density of criminal agents reaches a low point and then intervene to stabilize the system at that level (see \cite{yerlanov2025} for additional suppression strategies).

Next, we consider a scenario in which the diffusion rate of crime agents is fixed while the police advection rate is varied. This provides an alternative perspective for exploring potential strategies and policy outcomes. The corresponding results are shown in Figure \ref{fig:dwchi}. Recall that $\chi=2$ represents hotspot policing, where the police advection rate matches that of crime agents (second row). Increasing $\chi$ concentrates police more strongly around hotspots (first row), whereas flipping the sign corresponds to off-hotspot policing, where police avoid areas of high attractiveness (last row). We also examine the case with no advection (third row).  We observe that increasing $\chi$ while keeping $D_u$ small results in chaotic or periodic behavior, which can be attributed to the fact that $\chi > \chi^+$. By contrast, increasing $D_u$ appears to stabilize the system, at least in a temporal sense.

\begin{figure}[H]
    \centering
    \includegraphics[width=0.8\linewidth]{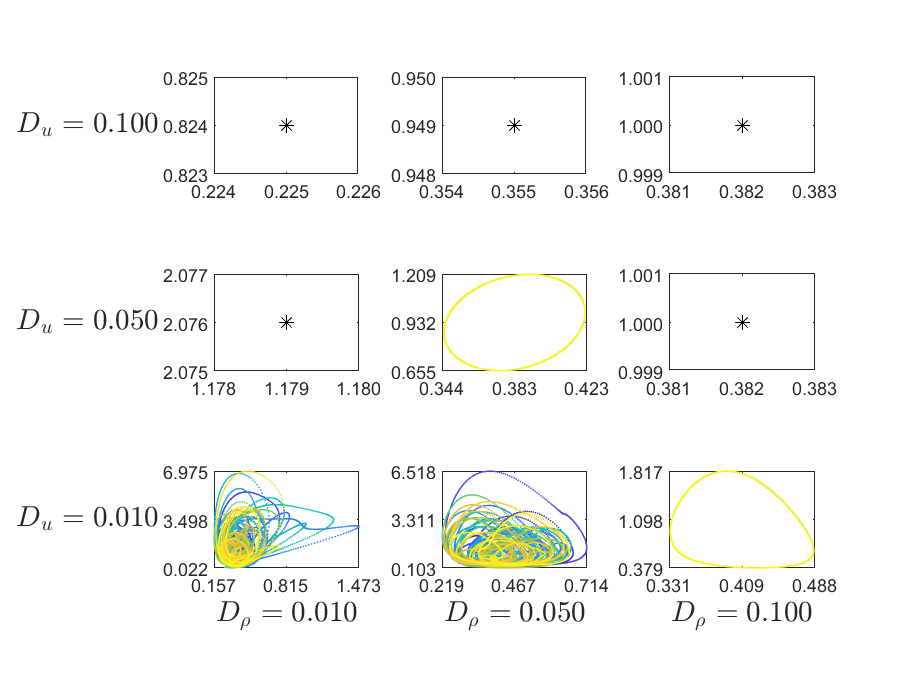}
    \caption{Solutions to \eqref{eq:E} for various $D_\rho$ (columns) and $D_u$ (rows). 
    The non-transient $\rho(\pi/2,t)$ and $u(\pi/2,t)$ are plotted against each other. The remaining parameters are: $D_A=0.1$, $\alpha=\beta=1$, $\lambda=(1+\sqrt{5})/2$, $\chi=2$. The colors represent the time: dark blue -- $t=800$ and yellow -- $t=1000$. The star represents the constant-in-time solutions. Whenever $\rho(\pi/2,t)=0.382$ and $u(\pi/2,t)=1$, this indicates constant-in-space solutions as well. We observe that the reduction in the diffusion, which is a proxy for random movement here, leads to less predictable behavior, destabilizing the crime landscape.}
    \label{fig:dudw}
\end{figure}

However, as shown by the linear stability analysis in \cite{yerlanov2025}, no value of $D_u$ with fixed $\chi = 4$ can shift the system into the linearly stable regime; in other words, there does not exist $D_u > 0$ such that $\chi^- < \chi < \chi^+$ for the parameters given in the caption of Figure \ref{fig:dwchi}.  In this setting, high police concentration in highly attractive regions prevents the system from reaching spatial uniformity, regardless of adjustments to patrolling (\textit{i.e.}, random movement).  In particular, when police movement has no advection component, the system decouples: $u$ converges to a constant steady state, while the remaining scalar fields develop spatially heterogeneous solutions. In this case, the model effectively reduces to a two-component system, which has already been extensively studied in the works referenced in the introduction.

In our simulations, off-hotspot policing is associated with higher crime density and lower police presence, whereas the opposite trend is observed under hotspot policing. For example, in the bottom-left corner of Figure \ref{fig:dwchi}, the crime density reaches $\approx 28.3$, about 73 times higher than the constant steady-state value. While such disparities may resemble the heterogeneous distribution of crime in urban areas (\textit{cf}, \cite{thacher2011distribution}), we emphasize that these findings are model-based and hypothetical. Further empirical validation would be required before drawing any policy implications.

\begin{figure}[H]
    \centering
    \includegraphics[width=0.8\linewidth]{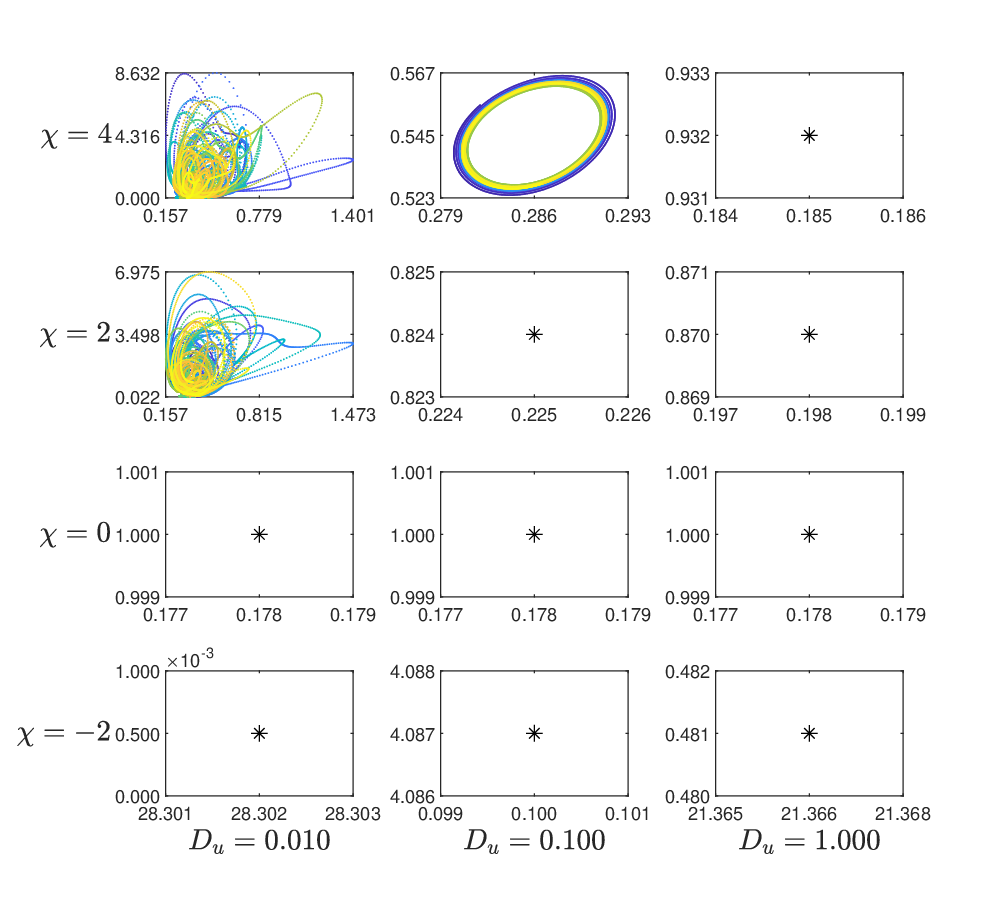}
    \caption{Solutions to \eqref{eq:E} for various $D_u$ (columns) and $\chi$ (rows). The non-transient $\rho(\pi/2, t)$ and $u(\pi/2,t)$ are plotted against each other. The remaining parameters are: $D_A=0.1$, $D_\rho=0.01$, $\alpha=\beta=1$, $\lambda=(1+\sqrt{5})/2$.  We represent two snapshots using dark blue ($t=800$) and yellow ($t=1000$). The star represents the constant-in-time solutions. Here, we note that not only diffusion rates, but other parameters as well, may affect the crime landscape. We observe that the magnitude and sign of $\chi$ can significantly impact the local density of both agents, leading to values far from the constant steady state. }
    \label{fig:dwchi}
\end{figure}

\section{Conclusion and Brief Discussions}

We have contributed to the understanding of the pattern formation of a model of urban crime by demonstrating the existence and stability of a nonhomogeneous solution bifurcating from the constant steady state. The bifurcation parameter, $\chi$, does not have any restrictions in terms of its sign and hence a bifurcating point is always defined. The corresponding spatially nonconstant solutions are almost always present, but they are not always stable. Our approach relies on connecting the central system with the negative Laplacian eigenvalue problem, introduced by Cantrell \textit{et al} \cite{cantrell2012global}.
This is a powerful tool, as it does not involve the wavemode vector and thus does not explicitly depend on the dimension of the problem. We employed this method to investigate the Hopf bifurcation and obtained a similar result to that derived for the one-dimensional case \cite{rodriguez2021understanding}. One could further extend and show the stability of periodic solutions, but we expect the results to be similar to those in \cite{rodriguez2021understanding}. In addition, we run numerical simulations of system \eqref{eq:E}, which confirm the theoretical results by demonstrating the existence of orbits. Even when solutions are not periodic, they still may exhibit interesting patterns, whose geometry can be studied as a separate future project. From a theoretical perspective, a worthwhile direction is to analyze chaos, which is only evidenced numerically in this paper, more rigorously. From a practical standpoint, a deeper understanding of various parameter combinations and corresponding solution behaviors needs to be studied as part of a potential comprehensive research design (\textit{e.g.}, how police should move given the current state of the system).  

As the work of Cantrell and collaborators \cite{cantrell2012global} complements the work by Short {\it et al.} \cite{short2010nonlinear}, this work complements our previous finding in \cite{yerlanov2025}, by providing the relevant Theorems. Although the approaches are similar, we note a few differences between this paper and that of Cantrell {\it et al.} \cite{cantrell2012global}. We have fewer restrictions on the parameter space for the first Theorem, as the control parameter does not have to be positive. On the other hand, we needed to impose more conditions to guarantee stability, as it is ``easier" for the relevant characteristic equation to have positive real parts. Cantrell {\it et al.} stated that the relevant eigenvalues are those that are sufficiently small. This does not appear to be the case for the extended model. We hypothesize that under suitable conditions, any eigenvalue of the negative Laplacian would be relevant. It would be of interest to find the equivalent but simpler assumptions in the Theorems. The apparent difference lies in the addition of the third equation, and hence, the increase in dimension from a linear algebra perspective. This comes with challenges, such as dealing with an already non-trivial kernel, third-degree polynomials, and more convoluted expressions involving additional parameters, among others. One of the goals of this paper is to demonstrate how to overcome these hindrances, particularly as the number of equations and/or dimensions increases. This is part of a larger effort to investigate pattern formation in the general Keller--Segel model, where analysis has mainly been conducted on systems of two equations. By increasing the model's dimension, one may obtain a more precise picture, but also introduce new complexity. However, with this complexity, new mathematical endeavors and insights emerge, and the authors hope that this paper will serve as a call and a guide to contribute to the field of pattern formation. 

{\bf Acknowledgements:}  This work was partially funded by NSF-DMS-2042413 and AFOSR MURI FA9550-22-1-0380.

\bibliographystyle{plain} 
\bibliography{ref}

\end{document}